\documentclass[12pt]{article}

\newif\ifpreprint   \preprinttrue
\newif\ifbiblatex   \biblatexfalse

\usepackage{amsthm,amsmath,amssymb,mathtools}
\usepackage{graphicx,booktabs,subcaption}
\usepackage[dvipsnames]{xcolor}
\usepackage{microtype}

\ifpreprint
  \usepackage{setspace}
  \usepackage[margin=1in]{geometry}
  \usepackage{authblk}
  \usepackage{newtxtext,newtxmath}   % Times text + math
\else
  \providecommand{\affil}[2][]{}     % no-op when authblk absent
\fi

\ifbiblatex
\else
  \usepackage{natbib}
\fi

\usepackage{my-macros}
\newcommand{\indep}{\perp\!\!\!\perp}
\newcommand{\nindep}{\not\!\perp\!\!\!\perp}

\ifpreprint
  \usepackage[colorlinks=true,linkcolor=Maroon,
    urlcolor=MidnightBlue,citecolor=MidnightBlue]{hyperref}
\else
  \usepackage[colorlinks=false]{hyperref}
\fi
\usepackage[capitalise,noabbrev]{cleveref}

\providecommand{\keywords}[1]{\small\textbf{\textit{Keywords---}} #1}

\title{Nonparametric Point Identification of Treatment Effect Distributions via Rank Stickiness}

\author{Tengyuan Liang}
\affil{The University of Chicago}
\date{\today}

\begin{document}
\maketitle

\begin{abstract}
Treatment effect distributions are not identified without restrictions on the joint distribution of potential outcomes. Existing approaches either impose rank preservation---a strong assumption---or derive partial identification bounds that are often wide. We show that a single scalar parameter, rank stickiness, suffices for nonparametric point identification while permitting rank violations. The identified joint distribution---the coupling that maximizes average rank correlation subject to a relative entropy constraint, which we call the Bregman-Sinkhorn copula---is uniquely determined by the marginals and rank stickiness. Its conditional distribution is an exponential tilt of the marginal with a Bregman divergence as the exponent, yielding closed-form conditional moments and rank violation probabilities; the copula nests the comonotonic and Gaussian copulas as special cases. The empirical Bregman-Sinkhorn copula converges at the parametric $\sqrt{n}$-rate with a Gaussian process limit, despite the infinite-dimensional parameter space. We apply the framework to estimate the full treatment effect distribution, derive a variance estimator for the average treatment effect tighter than the Fr\'{e}chet--Hoeffding and Neyman bounds, and extend to observational studies under unconfoundedness.
\end{abstract}

\keywords{treatment effect distribution, identification, heterogeneity, imputation, nonparametric copula, sensitivity analysis.}

%%%%%%%%%%%%%%%%%%%%%%%%%%%%%%%%%%%%%%%%%%%%%%
%% Main Paper                               %%
%%%%%%%%%%%%%%%%%%%%%%%%%%%%%%%%%%%%%%%%%%%%%%
% \tableofcontents

\section{Introduction}

When the goal of a policy evaluation is to understand who benefits and who is harmed, the Average Treatment Effect (ATE) or Quantile Treatment Effects (QTE) are insufficient---what is needed is the entire Treatment Effect Distribution (TED). Consider a job training program: a positive ATE on earnings may obscure the fact that gains are concentrated among high-skilled workers while low-skilled workers experience adverse effects. The TED thus reveals the complete pattern of individual heterogeneity in program impacts.

However, the TED is not identified without additional assumptions: only one potential outcome is observed per unit, so the joint distribution $(Y_0, Y_1) \sim \pi$ is never observed. The existing literature faces a sharp trade-off between the strength of assumptions and informativeness of conclusions.
At one extreme, \citet{manski1990nonparametric,manski1997monotone} derive partial identification bounds under minimal assumptions, but these worst-case bounds are often wide \citep{blundell2007changes}. Sharper bounds exploit the coupling structure: \citet{HeckmanSmithClements1997} and \citet{fan2010sharp} employ Fr\'{e}chet--Hoeffding copula bounds; \citet{kim2014identifying} tighten these under shape restrictions; \citet{firpo2019partial} derive bounds on functionals of the joint distribution.
At the other extreme, rank preservation---the requirement that each individual's rank in the outcome distribution is unchanged by treatment,
$$
(Y_0 - Y_0') \cdot (Y_1 - Y_1') > 0 \quad \text{a.s.\ for independent draws } (Y_0, Y_1), (Y_0', Y_1') \sim \pi \;,
$$
---yields point identification: the TED reduces to the quantile-by-quantile difference of the two marginals \citep{firpo2007efficient}; conditional variants have been studied by \citet{chernozhukov2005iv} and \citet{athey2006identification}. Yet \citet{HeckmanSmithClements1997} argue this assumption is implausible in many empirical settings; indeed, under existing approaches, even mild departures from rank preservation undermine nonparametric point identification of the TED.

This paper shows that point identification does not require rank preservation. We introduce a single structural parameter---rank stickiness $\rho$---that governs the degree of departure from comonotonicity and suffices for nonparametric point identification of the TED given the marginals. Rank stickiness captures the empirical regularity that treatment shifts the distribution of outcomes while individual ranks exhibit persistence, a phenomenon well-documented in earnings mobility both within individual lifetimes \citep{kopczuk2010earnings} and across generations \citep{chetty2014united}. We treat $\rho$ as a sensitivity parameter: rather than assuming a single value, we characterize how the identified TED varies with $\rho$.

Formally, we model the joint distribution of $(Y_0, Y_1) \sim \pi$ as the coupling that maximizes average rank correlation subject to a relative entropy constraint parameterized by $\rho \in (0,1]$:
$$
\max_{\pi \in \Pi(\mu, \nu)~ \text{s.t.}~ \mathsf{KL}(\pi \mid \mu \otimes \nu) \leq R(\rho)} ~\E_{(Y_0, Y_1) \sim \pi} \E_{(Y_0', Y_1') \sim \pi} (Y_0 - Y_0') \cdot(Y_1 - Y_1')\;,
$$
where $\Pi(\mu, \nu)$ is the set of couplings with marginals $\mu, \nu$, and $R(\rho)$ is a monotone function of $\rho$ that calibrates the degree of rank stickiness. We call the solution the Bregman-Sinkhorn copula. The conditional distribution of $Y_1 \mid Y_0$ takes the form of an exponential tilting of the treatment marginal, with exponent given by a Bregman divergence---yielding closed-form conditional moments and rank violation probabilities. The copula nests the comonotonic copula as $\rho \to 1$ and generalizes the Gaussian copula to the nonparametric setting.

The Bregman-Sinkhorn copula is nonparametrically identified by $(\mu, \nu, \rho)$ alone under mild regularity conditions on the marginals. Although rank violations occur with non-vanishing probability at every $\rho \in (0,1)$, the conditional rank of $Y_1$ given $Y_0$ remains sticky around the identity as $\rho \to 1$, with an explicit Bregman-divergence characterization of the conditional distribution. The relaxation of rank preservation incurs no loss in statistical efficiency: the empirical Bregman-Sinkhorn copula converges at the parametric $\sqrt{n}$-rate with a Gaussian process limit for any $\rho \in (0,1]$, despite the infinite-dimensional parameter space.

Applied to causal inference, efficient estimation of $\pi_\rho$ yields the entire treatment effect distribution $\cL(Y_1 - Y_0)$ and any functional of the joint distribution $(Y_0, Y_1)$. We derive a closed-form variance estimator for the ATE that is tighter than the Fr\'{e}chet--Hoeffding and Neyman bounds, and verify this in simulations that also examine sensitivity of the estimated TED to $\rho$. Empirically, we find the comonotonic copula ($\rho = 1$) simultaneously underestimates treatment effect heterogeneity and overestimates the variance of the ATE estimator; the Bregman-Sinkhorn copula corrects both. The framework extends to observational studies: under unconfoundedness, a covariate-specific stickiness parameter identifies the conditional joint distribution stratum by stratum, and the per-stratum transport map is computed via a gradient flow whose driving term can be estimated via binary classification on the observed data.

At a broader level, access to the joint distribution of potential outcomes goes well beyond what conditional average treatment effects can deliver \citep{athey2017state,athey2019machine}: the conditional mean identifies which subgroups benefit on average \citep{FarrellLiangMisra2021,athey2021policy}, but the TED reveals the probability that a given individual is harmed by treatment---information essential for targeting and policy design.

We conclude this section by collecting notation used throughout. Let $\cY$ be a compact subset of $\R^d$, $\sP(\cY)$ the set of Borel probability measures on $\cY$, and $\Pi(\mu, \nu)$ the set of couplings with marginals $\mu, \nu \in \sP(\cY)$. We denote by $\cL(Y)$ the law of a random variable $Y$ and by $T \sharp \mu$ the pushforward of $\mu$ under a measurable map $T$, defined by $(T \sharp \mu)(B) = \mu(T^{-1}(B))$ for every Borel set $B$. For scalar-valued potential outcomes $(Y_0, Y_1) \sim \pi \in \Pi(\mu, \nu)$, we write $F(y_0) = \mu((-\infty, y_0])$ and $G(y_1) = \nu((-\infty, y_1])$ for the marginal distribution functions, and $F^{-1}$, $G^{-1}$ for the corresponding quantile functions. The stickiness parameter is $\rho \in (0, 1]$, and we set $\epsilon(\rho) := (1-\rho)/\rho$ unless otherwise stated.

\section{Rank Violations and Optimal Transport}
\label{sec:rank-violations}

For a given individual, let $Y_0$ and $Y_1$ denote the potential outcomes under control and treatment, respectively; the individual treatment effect is $Y_1 - Y_0$. Since $Y_0$ and $Y_1$ are never jointly observed for the same unit, the law of $Y_1 - Y_0$ is not identified without additional restrictions on the coupling $(Y_0, Y_1) \sim \pi \in \cP(\cY \times \cY)$.

A canonical identifying restriction for scalar outcomes is \emph{strict rank preservation}:
$$
  (Y_0, Y_1) \sim \pi = (F^{-1}, G^{-1})\sharp U, \qquad U \sim \mathrm{Unif}(0,1) \;.
$$
This is a strong restriction: it implies $Y_1 = G^{-1} \circ F(Y_0)$ almost surely. Equivalently, if a unit is at quantile $u$ of the control-outcome distribution, that unit is also at quantile $u$ of the treatment-outcome distribution. \citet{HeckmanSmithClements1997} argue that this assumption is often implausible. We therefore consider a weaker notion, which we call \emph{average rank preservation}.

\subsection{Optimal Transport and Average Rank Preservation}
We define average rank preservation in the general multivariate setting ($\cY = \R^d$).

\begin{definition}[Average rank preservation]
A coupling $\pi \in \Pi(\mu, \nu)$ satisfies average rank preservation if
\begin{align*}
  \sR(\pi) := \frac{1}{2} \E_{(Y_0, Y_1) \sim \pi} \E_{(Y'_0, Y'_1) \sim \pi} \langle Y_0 - Y'_0 , Y_1 - Y'_1 \rangle > 0 \;,
\end{align*}
where $(Y_0, Y_1)$ and $(Y'_0, Y'_1)$ are independent draws from $\pi$. This quantity is the expected inner product between the control-outcome gap and the treatment-outcome gap for two independent units.
\end{definition}

The following notion ties the average rank preservation measure to the Kullback-Leibler divergence of $\pi$ from the independence coupling.
\begin{definition}[$\rho$-rank preservation]
For $\rho \in (0, 1]$ and $\epsilon(\rho) := (1-\rho)/\rho$, a coupling $\pi \in \Pi(\mu, \nu)$ satisfies $\rho$-rank preservation if
\begin{align*}
   \sR(\pi) >  \epsilon(\rho) \, \mathsf{KL}\bigl( \pi \mid \mu \otimes \nu \bigr) \;.
\end{align*}
$\rho$-rank preservation generalizes average rank preservation: the case $\rho = 1$ reduces to the condition $\sR(\pi) > 0$.
\end{definition}

We now formalize the connection between rank-preservation and optimal transport (OT).
\begin{proposition}[Rank correlation and entropic OT]
  \label{prop:maximize-average-rank}
Suppose $\mu, \nu \in \sP(\cY)$ with bounded second moments.
\begin{enumerate}
  \item[(i)] Suppose $\mu$ is absolutely continuous with respect to the Lebesgue measure. Then the unique maximizer of $\sR(\pi)$ over $\Pi(\mu, \nu)$ is the Monge--Kantorovich coupling $\pi_1 = (\mathop{id}, \nabla\phi_1)\sharp\mu$, where $\nabla\phi_1$ is the Brenier map from $\mu$ to $\nu$.
  \item[(ii)] For each $\rho \in (0,1)$ and $\epsilon(\rho) := (1-\rho)/\rho$, maximizing $\sR(\pi) - \epsilon(\rho) \,\mathsf{KL}(\pi\mid \mu\otimes\nu)$ over $\Pi(\mu,\nu)$ is equivalent to solving the entropic OT problem
  $$
    \min_{\pi \in \Pi(\mu, \nu)} \int_{\cY \times \cY} \frac{1}{2}\| y_0 - y_1 \|^2 \dd \pi(y_0, y_1)  + \epsilon(\rho) \, \mathsf{KL}(\pi \mid \mu \otimes \nu) \;.
  $$
\end{enumerate}
\end{proposition}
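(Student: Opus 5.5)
The plan is to reduce both parts to the identity linking $\sR(\pi)$ with the quadratic transport cost on $\Pi(\mu,\nu)$, and then invoke Brenier's theorem for (i) and a rescaling for (ii).

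\emph{Step 1 (the identity).} Expanding the inner product and using independence of the two draws,
\[
\sR(\pi) = \tfrac12\bigl(2\,\E_\pi\langle Y_0,Y_1\rangle - 2\,\langle \E Y_0, \E Y_1\rangle\bigr) = \E_\pi\langle Y_0,Y_1\rangle - \langle m_\mu, m_\nu\rangle ,
\]
where $m_\mu, m_\nu$ are the means, which depend only on the marginals. Moreover
\[
\tfrac12\|y_0-y_1\|^2 = \tfrac12\|y_0\|^2+\tfrac12\|y_1\|^2-\langle y_0,y_1\rangle ,
\]
so for every $\pi\in\Pi(\mu,\nu)$,
\[
\int \tfrac12\|y_0-y_1\|^2\,\dd\pi = C(\mu,\nu) - \sR(\pi), \qquad C(\mu,\nu) := \tfrac12 M_2(\mu)+\tfrac12 M_2(\nu) - \langle m_\mu,m_\nu\rangle .
\]
Here $C(\mu,\nu)$ is a constant determined by the marginals. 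Bounded second moments make all of these terms finite, and Cauchy--Schwarz gives integrability of $\langle Y_0,Y_1\rangle$ under every coupling.

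\emph{Step 2 (part (i)).} By Step 1, maximizing $\sR$ over $\Pi(\mu,\nu)$ is the same as minimizing the quadratic Kantorovich cost. Brenier's theorem (existence, uniqueness, and a gradient-of-convex form of the optimal plan when $\mu\ll$ Lebesgue with finite second moments) then says the unique optimizer is $(\mathrm{id},\nabla\phi_1)\sharp\mu$.

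The main obstacle is uniqueness over \emph{all} couplings, not just over maps. I would cite Brenier's theorem in its full form, e.g.\ Villani's Theorem 2.12. The argument runs as follows: any optimal plan has cyclically monotone support, so by Rockafellar's theorem the support lies in the subdifferential of a convex function $\phi$. Since $\phi$ is differentiable Lebesgue-a.e., and hence $\mu$-a.e., the plan is supported on a graph. Two optimal plans average to another optimal plan, and that average must also be supported on a graph, which forces the two plans to coincide.

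\emph{Step 3 (part (ii)).} For $\rho\in(0,1)$ we have $\epsilon(\rho)>0$. By Step 1,
\[
\sR(\pi)-\epsilon\,\mathsf{KL}(\pi\mid\mu\otimes\nu) = C(\mu,\nu) - \Bigl[\int\tfrac12\|y_0-y_1\|^2\,\dd\pi + \epsilon\,\mathsf{KL}(\pi\mid\mu\otimes\nu)\Bigr].
\]
The two problems therefore have the same (possibly empty) argmax/argmin sets and values differing by the constant $C$. Couplings with infinite KL give $-\infty$ on the left and $+\infty$ on the right, so they are consistently excluded from both problems.

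For completeness I would also note existence and uniqueness for the entropic problem. Existence follows because $\Pi(\mu,\nu)$ is weakly compact on the compact $\cY$, while the cost term is continuous and KL is weakly lower semicontinuous. Uniqueness follows from strict convexity of KL. This makes the equivalence one between well-posed problems.
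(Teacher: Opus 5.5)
Your proposal is correct and follows the paper's proof. The paper uses exactly the identity $\sR(\pi) = -\int \frac12\|y_0-y_1\|^2\,\dd\pi + \int \frac12\|y_0-y_1\|^2\,\dd(\mu\otimes\nu)$, and your constant $C(\mu,\nu)$ is that second integral; it then invokes Brenier's theorem for (i) and the constant shift for (ii). Your additional points are sound elaborations of steps the paper leaves implicit: uniqueness over all couplings via cyclical monotonicity, the handling of infinite-KL couplings, and existence and uniqueness of the entropic minimizer.
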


\subsection{Rank Violations}
Consider the scalar case with $\cY = \R$. The coupling maximizing $\rho$-rank preservation exhibits rank violations for every $\rho \in (0, 1)$.

For the case $\rho = 1$, maximizing average rank preservation $\sR(\pi)$ over $\Pi(\mu, \nu)$ uniquely identifies the Monge--Kantorovich coupling $\pi_1 = (\mathop{id}, G^{-1} \circ F)\sharp\mu$. This coupling exhibits perfect rank preservation: if a unit is at quantile $u$ of the control-outcome distribution with $Y_0 = F^{-1}(u)$, that unit is also at quantile $u$ of the treatment-outcome distribution with $Y_1 = G^{-1}(u)$.

For the case $\rho \in (0, 1)$, we will prove in Theorem~\ref{thm:identifiability} that under mild regularity conditions on $\mu, \nu$, maximizing $\rho$-rank preservation over $\Pi(\mu, \nu)$ uniquely identifies a nonparametric coupling $\pi_\rho$
\begin{align}
\label{eqn:pi-rho}
\pi_\rho := \argmax_{\pi \in \Pi(\mu, \nu)} \sR(\pi) - \epsilon(\rho) \,\mathsf{KL}(\pi\mid\mu\otimes\nu) \;.
\end{align}
\begin{proposition}[Rank violations]
  \label{prop:rank-violations}
  Let $\cY \in \R$ be a bounded interval, and let $\mu$ and $\nu$ be absolutely continuous with respect to the Lebesgue measure with strictly positive densities.
  For $\rho \in (0, 1)$, consider the coupling $(Y_0, Y_1) \sim \pi_\rho \in \Pi(\mu, \nu)$ uniquely defined in \cref{eqn:pi-rho}.
  Then for any $u \in (0, 1)$,
  $$
  \PP\Bigl( Y_1 \neq G^{-1}(u) \mid Y_0 = F^{-1}(u) \Bigr) > 0  \;.
  $$
\end{proposition}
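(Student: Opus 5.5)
The approach is to use the Gibbs/Schrödinger structure of the entropic OT solution from \cref{prop:maximize-average-rank}(ii). First I would argue that $\pi_\rho$ exists and is unique. The objective $\sR(\pi)-\epsilon(\rho)\mathsf{KL}(\pi\mid\mu\otimes\nu)$ on $\Pi(\mu,\nu)$ equals, up to constants depending only on the marginals, $\int y_0y_1\,\dd\pi - \epsilon\,\mathsf{KL}(\pi\mid\mu\otimes\nu)$. Indeed, $\sR(\pi)=\E_\pi[Y_0Y_1]-\E_\mu Y_0\,\E_\nu Y_1$ in the scalar case. This functional is strictly concave, and $\Pi(\mu,\nu)$ is weakly compact on a bounded interval, so a unique maximizer exists. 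Standard entropic OT duality (Csisz\'ar, R\"uschendorf--Thomsen) then gives potentials $f,g$ such that
\begin{align*}
\frac{\dd\pi_\rho}{\dd(\mu\otimes\nu)}(y_0,y_1)=\exp\Bigl(\tfrac{1}{\epsilon}\bigl(y_0y_1-f(y_0)-g(y_1)\bigr)\Bigr)\;,
\end{align*}
with $f,g$ bounded because $\cY$ is bounded and the cost is bounded there. If needed, I would take the existence of this form from Theorem~\ref{thm:identifiability}, which the paper announces.

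The key consequence is that the conditional law of $Y_1$ given $Y_0=y_0$ has a density with respect to $\nu$ that is strictly positive and finite everywhere. Explicitly,
\begin{align*}
p(y_1\mid y_0)=\exp\bigl((y_0y_1-f(y_0)-g(y_1))/\epsilon\bigr)\,\nu(y_1)\;,
\end{align*}
where $\nu(y_1)>0$ is the Lebesgue density. This holds for every $y_0$ once we fix the canonical version of the disintegration given by the Schrödinger potentials, where $f$ is defined pointwise through the Schr\"odinger equation $f(y_0)=\epsilon\log\int\exp((y_0y_1-g(y_1))/\epsilon)\,\dd\nu(y_1)$. So $\cL(Y_1\mid Y_0=y_0)$ is absolutely continuous with respect to Lebesgue measure. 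In particular it assigns zero mass to the single point $G^{-1}(u)$, which gives
\begin{align*}
\PP\bigl(Y_1\neq G^{-1}(u)\mid Y_0=F^{-1}(u)\bigr)=1>0\;.
\end{align*}
For an arguably more informative statement one can add that the conditional law has full support on $\cY$, so any neighborhood avoiding the comonotone value has positive mass.

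The main obstacle is making the conditional-probability statement meaningful at a fixed point $u$, since conditioning on a null event is defined only $\mu$-a.e. I would resolve this by working with the explicit version above. Boundedness and continuity of $g$, and hence of $f$ via the Schr\"odinger equation (dominated convergence on the bounded interval), make the kernel continuous in $y_0$, so the version is canonical and valid for every $y_0\in\cY$. The other technical step is establishing the dual potentials. On compact support with a continuous cost, this follows from Sinkhorn fixed-point arguments, and I would simply cite the standard theory or the forthcoming identifiability theorem.
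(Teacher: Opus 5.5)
Your proof is correct, and it takes a somewhat different route from the paper's. Both arguments rest on the Gibbs form $\dd\pi_\rho/\dd(\mu\otimes\nu) = \exp\bigl((y_0y_1 - \phi_\rho(y_0) - \psi_\rho(y_1))/\epsilon(\rho)\bigr)$ with bounded continuous potentials, read off at the canonical version of the conditional kernel. The difference is in how you finish.

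You observe that this form makes $\cL(Y_1 \mid Y_0 = y_0)$ absolutely continuous with respect to $\nu$, hence atomless, so the probability in question is exactly $1$. This needs only $\pi_\rho \ll \mu\otimes\nu$ and a pointwise version of the kernel, with no lower bound on the density. It also makes clear that the literal statement is weak.

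The paper instead uses boundedness of $\phi_\rho \oplus \psi_\rho$ to bound the Gibbs density below by a constant $c(\cY,\pi_\rho) > 0$, uniformly in $(y_0,y_1)$. From this it deduces, for every $\delta < \min(u,1-u)$,
\[
\PP\bigl( Y_1 \notin [G^{-1}(u-\delta), G^{-1}(u+\delta)] \mid Y_0 = F^{-1}(u) \bigr) \geq c(\cY,\pi_\rho)\,(1-2\delta) \;.
\]
That is, rank displacements of any fixed quantile size $\delta$ occur with probability bounded below by a constant not depending on $u$. This quantitative version is what supports the later remark that rank violations occur with non-vanishing probability for $\rho<1$.

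Your full-support add-on points in that direction. Without the uniform lower bound on the density, though, it gives positivity only pointwise in $u$, not a bound independent of $u$.
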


\section{Bregman-Sinkhorn Copula}
\label{sec:bregman-sinkhorn-copula}

Before formally deriving the nonparametric identification result under regularity conditions, we introduce the key properties of the Bregman-Sinkhorn copula, taking uniqueness of the solution to \cref{eqn:pi-rho} as given; the proof of uniqueness is deferred to Theorem~\ref{thm:identifiability}.

\subsection{Sinkhorn Copula}
Let $\epsilon(\rho) := (1-\rho)/\rho \in (0, \infty)$. The Sinkhorn dual potentials $\phi_\rho, \psi_\rho : \cY \to \R$ satisfy the system of equations
\begin{align}
  \label{eqn:sinkhorn-system}
\phi_\rho(y_0) &= \epsilon(\rho) \log \int_{\cY} \exp\Bigl( \frac{\langle y_0, y_1 \rangle - \psi_{\rho}(y_1)}{\epsilon(\rho)} \Bigr) \dd \nu(y_1) \;, \nonumber \\
\psi_\rho(y_1) &= \epsilon(\rho) \log \int_{\cY} \exp\Bigl( \frac{\langle y_1, y_0 \rangle - \phi_\rho(y_0)}{\epsilon(\rho)} \Bigr) \dd \mu(y_0) \;.
\end{align}
Under regularity conditions, the pair $(\phi_\rho, \psi_\rho)$ is uniquely identified by the marginals $\mu, \nu$ and the stickiness parameter $\rho$ (to be shown in Theorem~\ref{thm:identifiability}), up to an additive constant $(\phi_\rho,\psi_\rho) \mapsto (\phi_\rho-c, \psi_\rho+c)$.

With these convex dual functions, we define the \emph{Sinkhorn copula} $\pi_{\rho} \in \Pi(\mu, \nu)$ as
\begin{align}
  \label{eqn:sinkhorn-copula}
\frac{\dd \pi_\rho(y_0, y_1)}{\dd \mu(y_0) \dd \nu(y_1)} := \exp\Bigl( \frac{\langle y_0, y_1 \rangle - \phi_{\rho}(y_0) - \psi_{\rho}(y_1)}{\epsilon(\rho)} \Bigr)\;.
\end{align}

The name \emph{Sinkhorn copula} comes from the fact that in the scalar-valued case,
\begin{align*}
\PP(Y_0 \leq y_0, Y_1 \leq y_1)
&= \int_0^{F(y_0)} \dd u_0 \int_0^{G(y_1)} \dd u_1 \, \exp\Bigl( \frac{ F^{-1}(u_0) G^{-1}(u_1) - \phi_{\rho} \circ F^{-1}(u_0) - \psi_{\rho} \circ G^{-1} (u_1)}{\epsilon(\rho)} \Bigr)\;, 
\end{align*}
where the density of the copula satisfies
\begin{align}
  \label{eqn:sinkhorn-copula-density}
c_{\mathrm{S}}^{\rho}(u_0, u_1) := \exp\Bigl( \frac{ F^{-1}(u_0) G^{-1}(u_1) - \phi_{\rho} \circ F^{-1}(u_0) - \psi_{\rho} \circ G^{-1} (u_1)}{\epsilon(\rho)} \Bigr) \;.
\end{align}

\subsection{Bregman Divergence and Conditional Exponential Family}
The name Bregman-Sinkhorn copula reflects the following structure: under $\pi_\rho$, the conditional distribution of $Y_1$ given $Y_0$ is an exponential tilt of the marginal $\nu = \cL(Y_1)$, with a Bregman divergence as the exponent.

\begin{proposition}[Bregman divergence representation]
  \label{prop:bregman-divergence-representation}
  Consider $\cY = \R^d$.
  Let $(Y_0, Y_1) \sim \pi_\rho$ be the Sinkhorn copula defined in \cref{eqn:sinkhorn-copula}, then for any $y_0 \in \cY$, the conditional distribution of $Y_1 \mid Y_0 = y_0$ is given by
  \begin{align}
  \PP( \dd y_1 \mid Y_0 = y_0 )=  \frac{\exp\Bigl( \frac{ - B_{\psi_\rho}(y_1 \mid \nabla \psi^\ast_\rho(y_0))}{\epsilon(\rho)} \Bigr) \dd \nu(y_1) }{ \int_{\cY} \exp\Bigl( \frac{ - B_{\psi_\rho}(y'_1 \mid \nabla \psi^\ast_\rho(y_0))}{\epsilon(\rho)} \Bigr) \dd \nu(y_1') }  \;,
  \end{align}
  where $B_{\psi_\rho}(y_1 \mid y'_1) := \psi_\rho(y_1) - \psi_\rho(y'_1) - \langle y_1 - y'_1, \nabla\psi_\rho(y'_1) \rangle$ is the Bregman divergence associated with $\psi_\rho$, and $\psi_\rho^\ast (y_0) := \sup_{y_1 \in \cY} \{ \langle y_0, y_1 \rangle - \psi_\rho(y_1) \}$ is the convex conjugate of $\psi_\rho$.
\end{proposition}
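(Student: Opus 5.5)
The plan is to compute the conditional density directly from the Sinkhorn copula density in \cref{eqn:sinkhorn-copula}. Then I would complete the exponent into a Bregman divergence by subtracting terms that depend only on $y_0$.

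First, I disintegrate $\pi_\rho$ with respect to its first marginal $\mu$. Since $\pi_\rho \in \Pi(\mu,\nu)$ has density with respect to $\mu\otimes\nu$ given by \cref{eqn:sinkhorn-copula}, the conditional law of $Y_1$ given $Y_0=y_0$ has density with respect to $\nu$:
\begin{align*}
\frac{\dd \PP(\cdot \mid Y_0 = y_0)}{\dd \nu}(y_1) = \exp\Bigl(\frac{\langle y_0,y_1\rangle - \phi_\rho(y_0) - \psi_\rho(y_1)}{\epsilon(\rho)}\Bigr).
\end{align*}
This density integrates to one in $y_1$ precisely because of the first Sinkhorn equation in \cref{eqn:sinkhorn-system}. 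Equivalently, it is proportional in $y_1$ to $\exp\bigl((\langle y_0,y_1\rangle - \psi_\rho(y_1))/\epsilon(\rho)\bigr)$, with normalizer $\exp(\phi_\rho(y_0)/\epsilon(\rho))$.

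Second, I rewrite the exponent. Set $y_1^\star := \nabla\psi_\rho^\ast(y_0)$. The second Sinkhorn equation writes $\psi_\rho$ as $\epsilon$ times a log-Laplace transform, that is, a log-partition function of $y_1$ integrated against a positive measure. Hence $\psi_\rho$ is smooth and strictly convex, and $\nabla\psi_\rho$ is injective. For $y_0$ in the range of $\nabla\psi_\rho$, the conjugate duality $\nabla\psi_\rho^\ast = (\nabla\psi_\rho)^{-1}$ gives $\nabla\psi_\rho(y_1^\star) = y_0$. Substituting into the definition of $B_{\psi_\rho}$ yields
\begin{align*}
-B_{\psi_\rho}(y_1 \mid y_1^\star) = \langle y_0, y_1\rangle - \psi_\rho(y_1) + \bigl[\psi_\rho(y_1^\star) - \langle y_1^\star, y_0\rangle\bigr].
\end{align*}
The bracketed term equals $-\psi_\rho^\ast(y_0)$ by Fenchel--Young equality and depends only on $y_0$. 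Thus $-B_{\psi_\rho}(y_1\mid y_1^\star)/\epsilon(\rho)$ differs from the exponent in the first step by a function of $y_0$ alone. That function cancels between numerator and denominator in the normalized expression, which gives the claimed formula.

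The main obstacle is the conjugacy step: justifying $\nabla\psi_\rho(\nabla\psi_\rho^\ast(y_0)) = y_0$ for every $y_0$. When $\cY$ is compact, $\nabla\psi_\rho$ maps into the interior of the convex hull of $\mathrm{supp}\,\mu$, so a general $y_0$ need not lie in its range. In that case the supremum defining $\psi_\rho^\ast$ may be attained on the boundary, and the identity fails.

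My first attempt is to restrict to $y_0$ in the interior of the convex hull of the support of $\mu$. This covers $\mu$-almost every $y_0$ whenever $\mu$ puts no mass on the boundary of that hull, and conditional laws are only defined $\mu$-a.e. anyway. There, strict convexity (positive-definite Hessian, namely the conditional covariance of $Y_0$ given $Y_1$, divided by $\epsilon$) plus steepness of the log-Laplace transform makes $\nabla\psi_\rho$ a bijection onto that interior. This delivers the Legendre inverse.

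Alternatively, I would extend $\psi_\rho$ to all of $\R^d$ through its integral formula, as the statement's choice $\cY=\R^d$ suggests, so that $\psi_\rho$ is a Legendre-type function. For points outside the range, I would remark that any $y_1^\star$ with $\nabla\psi_\rho(y_1^\star)=y_0$ works, since only that gradient identity was used.
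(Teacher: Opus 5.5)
Your proposal is correct and takes the same route as the paper. You disintegrate $\pi_\rho$ against $\mu$, normalize with the first Sinkhorn equation, and write $\langle y_0,y_1\rangle-\psi_\rho(y_1) = -B_{\psi_\rho}(y_1\mid\nabla\psi_\rho^\ast(y_0)) + \psi_\rho^\ast(y_0)$, so the $y_0$-only term cancels in the ratio.

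Your extra care with the conjugacy step improves on the paper, which uses $\nabla\psi_\rho(\nabla\psi_\rho^\ast(y_0))=y_0$ only implicitly. Restricting to $y_0$ in the interior of the convex support of $\mu$, a $\mu$-full set under absolute continuity, is the right fix. Your closing remark about points outside the range of $\nabla\psi_\rho$ is vacuous, though, since no such $y_1^\star$ exists there, and you can drop it.
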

\begin{proof}[Proof of Proposition~\ref{prop:bregman-divergence-representation}]
By definition of $\pi_\rho$ in \cref{eqn:sinkhorn-copula}
$$
\PP( \dd y_1 \mid Y_0 = y_0 ) := \exp\Bigl( \frac{\langle y_0, y_1 \rangle - \phi_{\rho}(y_0) - \psi_{\rho}(y_1)}{\epsilon(\rho)} \Bigr) \dd \nu(y_1) \;.
$$
By the Sinkhorn equation $\phi_\rho(y_0) = \epsilon(\rho) \log \int_{\cY} \exp\Bigl( \frac{\langle y_0, y_1 \rangle - \psi_{\rho}(y_1)}{\epsilon(\rho)} \Bigr) \dd \nu(y_1)$, we have
\begin{align*}
  \PP( \dd y_1 \mid Y_0 = y_0 ) = \frac{\exp\Bigl( \frac{\langle y_0, y_1 \rangle - \psi_{\rho}(y_1)}{\epsilon(\rho)} \Bigr) \dd \nu(y_1) }{ \int_{\cY} \exp\Bigl( \frac{\langle y_0, y'_1 \rangle - \psi_{\rho}(y'_1)}{\epsilon(\rho)} \Bigr) \dd \nu(y_1') } =  \frac{\exp\Bigl( \frac{ - B_{\psi_\rho}(y_1 \mid \nabla \psi^\ast_\rho(y_0))}{\epsilon(\rho)} \Bigr) \dd \nu(y_1) }{ \int_{\cY} \exp\Bigl( \frac{ - B_{\psi_\rho}(y'_1 \mid \nabla \psi^\ast_\rho(y_0))}{\epsilon(\rho)} \Bigr) \dd \nu(y_1') }  \;.
\end{align*}
Here the last step follows from the fact that $\nabla \psi_\rho^\ast(y_0) = \argmax_{y_1} \{ \langle y_0, y_1 \rangle - \psi_\rho(y_1) \}$ and $\psi^\ast_\rho(y_0) = \langle y_0, \nabla \psi_\rho^\ast(y_0) \rangle - \psi_\rho(\nabla \psi_\rho^\ast(y_0))$,
and thus $$\langle y_0, y_1 \rangle - \psi_\rho(y_1) - \psi^\ast_{\rho}(y_0) = - B_{\psi_\rho}(y_1 \mid \nabla \psi^\ast_\rho(y_0)) \;.$$
\end{proof}

The conditional distribution of $Y_1 \mid Y_0 = y_0$ under the Sinkhorn copula is a member of the natural exponential family. Equivalently, letting $g(y_1) := \dd \nu(y_1) / \dd y_1$ denote the density of $\nu$, the conditional density of $Y_1 \mid Y_0 = y_0$ is
\begin{align}
  \label{eqn:conditional-distribution}
  p(Y_1 = y_1 \mid Y_0 = y_0) &= \exp\Bigl( \frac{\langle y_1, y_0 \rangle - \phi_\rho(y_0) - \psi_\rho(y_1)}{\epsilon(\rho)} \Bigr) g(y_1) \;.
\end{align}
This is a member of the natural exponential family with natural parameter $y_0 / \epsilon(\rho)$ and log-partition function $\phi_\rho(y_0) / \epsilon(\rho)$.
Consequently, the conditional mean and variance of $Y_1 \mid Y_0 = y_0$ are given by the first and second derivatives of $\phi_\rho$.

\begin{proposition}[Conditional moments]
  \label{prop:conditional-moments}
  Let $(Y_0, Y_1) \sim \pi_\rho$, the Bregman-Sinkhorn copula, and let $\phi_\rho$ be a solution to the Sinkhorn equation \eqref{eqn:sinkhorn-system}, then, for any $y_0 \in \cY$,
  \begin{align*}
    \E[Y_1 \mid Y_0 = y_0] &= \nabla \phi_\rho(y_0) \;, \\
    \mathop{Cov}[Y_1 \mid Y_0 = y_0 ] &= \epsilon(\rho) \; \nabla^2 \phi_\rho(y_0) \;.
  \end{align*}
\end{proposition}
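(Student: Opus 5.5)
The plan is to treat $\phi_\rho/\epsilon(\rho)$ as the log-partition function of the exponential family in \eqref{eqn:conditional-distribution} and differentiate under the integral sign. Write $\epsilon = \epsilon(\rho)$ and define
$$
Z(y_0) := \int_{\cY} \exp\Bigl( \frac{\langle y_0, y_1 \rangle - \psi_\rho(y_1)}{\epsilon} \Bigr) \dd \nu(y_1) \;,
$$
so that the first Sinkhorn equation in \eqref{eqn:sinkhorn-system} reads $\phi_\rho(y_0) = \epsilon \log Z(y_0)$. By the argument in the proof of Proposition~\ref{prop:bregman-divergence-representation}, the conditional law of $Y_1$ given $Y_0 = y_0$ has $\nu$-density $\exp\bigl((\langle y_0,y_1\rangle - \psi_\rho(y_1))/\epsilon\bigr)/Z(y_0)$.

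The first step is to justify differentiating $Z$ twice under the integral. Since $\cY$ is compact (per the standing notation), $y_1$ is bounded on $\cY$. The integrand is smooth in $y_0$, and its first and second $y_0$-derivatives are $\epsilon^{-1} y_1$ and $\epsilon^{-2} y_1 y_1^\top$ times the integrand. I need a local integrable dominating function, uniformly for $y_0$ in a neighborhood. This follows once $\exp(-\psi_\rho/\epsilon)$ is $\nu$-integrable. That in turn holds because the second Sinkhorn equation makes $\psi_\rho$ finite everywhere, and because $\psi_\rho$ is a log-integral of continuous functions, hence bounded below on compact $\cY$. If finiteness or boundedness of $\psi_\rho$ is delicate, I would defer to the regularity established in Theorem~\ref{thm:identifiability}. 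I expect this domination and regularity check to be the only real obstacle; the rest is algebra.

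With differentiation justified, the gradient is
$$
\nabla Z(y_0) = \epsilon^{-1} \int y_1 \, e^{(\langle y_0,y_1\rangle - \psi_\rho(y_1))/\epsilon} \dd\nu(y_1) \;.
$$
Hence
$$
\nabla \phi_\rho(y_0) = \epsilon \, \frac{\nabla Z(y_0)}{Z(y_0)} = \E[Y_1 \mid Y_0 = y_0] \;,
$$
where the last equality uses the conditional density above. For the Hessian, write
$$
\nabla^2 \phi_\rho = \epsilon\Bigl( \frac{\nabla^2 Z}{Z} - \frac{\nabla Z \, \nabla Z^\top}{Z^2} \Bigr) \;.
$$
Substituting $\nabla^2 Z/Z = \epsilon^{-2}\E[Y_1Y_1^\top\mid Y_0=y_0]$ and $\nabla Z/Z = \epsilon^{-1}\E[Y_1\mid Y_0=y_0]$ gives
$$
\nabla^2\phi_\rho(y_0) = \epsilon^{-1}\bigl(\E[Y_1Y_1^\top\mid y_0] - \E[Y_1\mid y_0]\E[Y_1\mid y_0]^\top\bigr) = \epsilon^{-1}\mathop{Cov}[Y_1\mid Y_0=y_0] \;.
$$
Rearranging yields $\mathop{Cov}[Y_1 \mid Y_0 = y_0] = \epsilon(\rho)\,\nabla^2\phi_\rho(y_0)$, which is the claim.

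Finally, I would remark that the result does not depend on the normalization $(\phi_\rho - c, \psi_\rho + c)$. Shifting $\phi_\rho$ by a constant leaves both of its derivatives unchanged, so any solution of \eqref{eqn:sinkhorn-system} works, matching the phrasing of the proposition.
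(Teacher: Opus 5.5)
Your proposal is correct and takes the same route as the paper: you treat $\phi_\rho/\epsilon(\rho)$ as the log-partition function of the natural exponential family in \eqref{eqn:conditional-distribution}, with natural parameter $y_0/\epsilon(\rho)$, and read off the conditional mean and covariance from its first two derivatives; your chain-rule bookkeeping for the powers of $\epsilon(\rho)$ is right. The domination step is simpler than you feared: because $\cY$ is compact, finiteness of $\phi_\rho$ at a single point already gives $\int e^{-\psi_\rho/\epsilon(\rho)}\,\dd\nu<\infty$, and this dominates all $y_0$-derivatives of the integrand locally uniformly, so you do not need Theorem~\ref{thm:identifiability}.
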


\subsection{Connection to Gaussian and Fréchet-Hoeffding Copulas}

We restrict to the scalar case $\cY = \R$ and establish connections to the Fr\'{e}chet--Hoeffding and Gaussian copulas.

Consider the conditional Bregman-Sinkhorn copula: for any $\rho \in (0, 1]$ and $\epsilon \in [0, 1)$, the conditional density of $G(Y_1) = u_1 \mid F(Y_0) = u_0$ is given by
\begin{align}
  \label{eqn:conditional-bregman-sinkhorn-copula}
c_{\mathrm{B}}^{\rho, \epsilon}(u_1 \mid U_0 = u_0) \propto \exp\Bigl( \frac{ - B_{\psi_\rho} \bigl( G^{-1}(u_1) \mid \dot\psi^\ast_{\rho} \circ F^{-1}(u_0)  \bigr) }{\epsilon} \Bigr) \;.
\end{align}
Here $\psi_\rho$ is the solution to the Sinkhorn system, and $\psi_\rho^\ast$ is its convex conjugate.

The Bregman-Sinkhorn copula nests two classical copulas as special cases:

(i) The Bregman-Sinkhorn copula generalizes the Gaussian copula to the nonparametric setting. The Bregman divergence reduces to a Mahalanobis distance when $\psi_\rho(x) = \rho x^2/2$, a quadratic function. With this parametric choice of $(\psi_{\rho}, \psi^\ast_\rho)$, $G = F = \Phi$ the standard Gaussian CDF, and the choice $\epsilon = (1-\rho^2)/\rho$, \cref{eqn:conditional-bregman-sinkhorn-copula} reduces to the conditional distribution of the Gaussian copula with correlation $\rho \in [0, 1)$
\begin{align*}
c_{\mathrm{G}}^{\rho}(u_1 \mid U_0 = u_0) \propto \exp\Bigl( \frac{2\rho \Phi^{-1}(u_0) \Phi^{-1}(u_1) - \rho^2 \Phi^{-1}(u_1)^2 - \Phi^{-1}(u_0)^2}{2(1 - \rho^2)} \Bigr) \;,
\end{align*}
with joint copula density
\begin{align}
\label{eqn:gaussian-copula-density}
c_{\mathrm{G}}^{\rho}(u_1, u_0) := \frac{1}{\sqrt{1-\rho^2}} \exp\Bigl( \frac{2\rho \Phi^{-1}(u_0) \Phi^{-1}(u_1) - \rho^2 \Phi^{-1}(u_1)^2 - \rho^2 \Phi^{-1}(u_0)^2}{2(1 - \rho^2)} \Bigr) \;.
\end{align}

(ii) The Bregman-Sinkhorn copula reduces to the comonotonic copula in the limit when $\rho \to 1$ and $\epsilon \to 0$. In this case, $\psi^\ast_1 = G^{-1} \circ F$ that achieves the Fréchet-Hoeffding copula, namely
\begin{align*}
c_{\mathrm{FH}}(u_1 \mid U_0 = u_0) := \delta_{u_0}(u_1) \;.
\end{align*}

\subsection{Rank Stickiness}

We now study the rank stickiness property of the nonparametric Bregman-Sinkhorn copula. The conditional rank distribution $G(Y_1) \mid F(Y_0) = u_0$ admits an analytic formula via the Bregman divergence representation in Proposition~\ref{prop:bregman-divergence-representation}. We characterize the rank distortion of the Bregman-Sinkhorn copula compared to the comonotonic copula, in the limiting case $\rho \to 1$ and $\epsilon \to 0$.

\begin{proposition}[Rank stickiness]
  \label{prop:rank-stickiness}
  Consider the Bregman-Sinkhorn copula $(G(Y_1), F(Y_0)) \sim c_{\mathrm{B}}^{\rho, \epsilon}$ defined in \cref{eqn:conditional-bregman-sinkhorn-copula} with $\epsilon \in [0, 1)$ and $\rho \in (0, 1]$. Assume that potential $\psi_\rho$ is strictly convex and $\mu, \nu$ have strictly positive and smooth densities supported on a bounded interval.
  
  Then the following rank stickiness properties hold for any fixed $u_0 \in [0, 1]$ and $\rho \in (0, 1]$ and a sequence $\epsilon_n \to 0$:
  \begin{align*}
  \E[ G(Y_1) \mid F(Y_0) = u_0] &= G \circ \dot\psi^\ast_{\rho} \circ F^{-1} (u_0) + o_n(1) \\
  \mathop{Var}[ G(Y_1) \mid F(Y_0) = u_0 ] &= o_n(1)
  \end{align*}
\end{proposition}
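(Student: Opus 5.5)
This is a Laplace-type concentration argument carried out in the quantile scale. Fix $u_0$ and $\rho$, and treat $\psi_\rho$ as a fixed strictly convex smooth function; only $\epsilon_n$ varies. Write $y_0 = F^{-1}(u_0)$ and $m := \dot\psi^\ast_\rho(y_0)$, which is the maximizer of $y_1 \mapsto y_0 y_1 - \psi_\rho(y_1)$ over the bounded interval $\cY$. By \cref{eqn:conditional-bregman-sinkhorn-copula}, after the change of variables $y_1 = G^{-1}(u_1)$, the conditional law of $Y_1$ given $F(Y_0)=u_0$ has density proportional to
\[
\exp\bigl(-B_{\psi_\rho}(y_1\mid m)/\epsilon_n\bigr)\, g(y_1)
\]
on $\cY$.

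The plan is to show that this law converges weakly to $\delta_m$ as $\epsilon_n \to 0$. Once that is established, both claims follow:
\begin{itemize}
\item $G$ is continuous and bounded on $\cY$, so $\E[G(Y_1)\mid\cdot] \to G(m) = G\circ\dot\psi^\ast_\rho\circ F^{-1}(u_0)$.
\item $\E[G(Y_1)^2\mid\cdot] \to G(m)^2$, so the variance tends to $0$.
\end{itemize}

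To prove the weak convergence, I would use the following properties of the exponent $h(y_1) := B_{\psi_\rho}(y_1\mid m)$.
\begin{itemize}
\item By strict convexity, $h \ge 0$ and $h$ vanishes only at $y_1 = m$.
\item $h$ is continuous on the compact set $\cY$.
\item For any $\delta>0$, $\eta(\delta) := \inf\{h(y_1) : |y_1-m|\ge\delta\} > 0$.
\end{itemize}

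Split the normalizing integral $Z_n = \int e^{-h/\epsilon_n} g$ into the regions $|y_1-m|<\delta$ and $|y_1-m|\ge\delta$.

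\emph{Lower bound on $Z_n$.} Near $m$ we have $h(y_1) \le C|y_1-m|$; if $m$ is interior, even $h(y_1) \le C|y_1-m|^2$. Since $g$ is bounded below by $g_{\min}>0$ and $m \in \cY$ (so at least a one-sided neighbourhood of $m$ lies in $\cY$), this gives $Z_n \ge c\,\epsilon_n$ or $c\sqrt{\epsilon_n}$. In either case the lower bound is polynomial in $\epsilon_n$.

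\emph{Upper bound on the far region.} The mass on $|y_1-m|\ge\delta$ is at most $e^{-\eta(\delta)/\epsilon_n}$.

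Dividing the two bounds, the conditional probability of $\{|Y_1-m|\ge\delta\}$ is at most $e^{-\eta(\delta)/\epsilon_n}/(c\,\epsilon_n) \to 0$. Since $\delta$ is arbitrary, this is the weak convergence.

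For the expansions, I would write $|G(Y_1)-G(m)| \le \|g\|_\infty\delta$ on the near region and $\le 1$ on the far region. This gives
\[
\bigl|\E[G(Y_1)\mid\cdot] - G(m)\bigr| \le \|g\|_\infty\delta + o_n(1).
\]
Similarly, $\mathrm{Var} \le \E[(G(Y_1)-G(m))^2\mid\cdot] \le \|g\|_\infty^2\delta^2 + o_n(1)$. Letting $\delta \downarrow 0$ gives both $o_n(1)$ statements.

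The main obstacle is the boundary and regularity of $m$. At $u_0 \in \{0,1\}$, or whenever the maximizer of $y_0y_1-\psi_\rho(y_1)$ lies on $\partial\cY$, the gradient $\dot\psi_\rho(m)$ need not equal $y_0$. In that case the Bregman divergence $B_{\psi_\rho}(\cdot\mid m)$ must be interpreted through the identity $y_0y_1-\psi_\rho(y_1)-\psi_\rho^\ast(y_0)$ used in the proof of Proposition~\ref{prop:bregman-divergence-representation}. I would work directly with this expression. It is still nonnegative, it vanishes uniquely at $m$ by strict convexity, and near $m$ it grows at most linearly. This preserves the polynomial lower bound on $Z_n$, so the argument goes through uniformly, including at the endpoints.
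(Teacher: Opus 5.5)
Your proof is correct and uses the same idea as the paper. The paper applies the Laplace method to the tilted measure $e^{-B_{\psi_\rho}(\cdot\mid m)/\epsilon_n}$ in the $u_1$-scale, with test functions $u_1$ and $u_1^2$, and concentrates at $m=\dot\psi^\ast_\rho\circ F^{-1}(u_0)$.

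Your explicit comparison of a polynomial lower bound on the normalizing constant against the exponentially small far-region mass is more careful than the paper's appeal to a nondegenerate interior minimizer. Your boundary treatment is also not idle. For $\rho<1$, $\dot\psi_\rho(y_1)=\E_{\pi_\rho}[Y_0\mid Y_1=y_1]$ ranges over a strict subinterval of the support, so $m\in\partial\cY$ for all $u_0$ in nondegenerate intervals near $0$ and $1$, not only at the endpoints.
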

\begin{remark}
  We call this rank stickiness for the following reason: when $\rho=1$ and $\epsilon_n$ small, the conditional rank of $G(Y_1) \mid F(Y_0) = u_0$ is concentrated around $G \circ \dot\psi^\ast_{1} \circ F^{-1} (u_0) = G \circ (G^{-1} \circ F) \circ F^{-1} (u_0) = u_0$, with bias and variance both of the order $o_n(1)$.

  Together with Proposition~\ref{prop:rank-violations}, this shows that the Bregman-Sinkhorn copula exhibits rank stickiness in the sense that the conditional rank of $Y_1$ given $Y_0$ is near the true rank when $\rho \to 1$, while still allowing for rank violations with non-vanishing probability when $\rho \in (0, 1)$.
\end{remark}

\section{Identification and Inference}
\label{sec:identification-inference}

\subsection{Identification}

We now prove that the Bregman-Sinkhorn copula is nonparametrically identifiable and unique under the following assumptions, without imposing any parametric restrictions on the marginals $\mu, \nu$ or the copula $\pi_\rho$. We establish the result via an asymmetric contraction argument suited for the Sinkhorn system \eqref{eqn:sinkhorn-system}.

\begin{assumption}
  \label{asmp:no-atoms}
\begin{enumerate}
  \item[(i)] $\cY \subset \R^d$ is compact.
  \item[(ii)] $\mu, \nu$ are absolutely continuous w.r.t. the Lebesgue measure and have strictly positive densities.
\end{enumerate}
\end{assumption}

\begin{theorem}[Nonparametric identifiability]
  \label{thm:identifiability}
Under Assumption \ref{asmp:no-atoms}, the $\rho$-Bregman-Sinkhorn copula $\pi_\rho$ that solves \cref{eqn:pi-rho} is uniquely identifiable given the marginals $\mu, \nu$, for any $\rho\in (0, 1]$.
\end{theorem}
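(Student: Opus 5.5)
We prove the statement in two cases: $\rho=1$, which is already handled by Proposition~\ref{prop:maximize-average-rank}(i), and $\rho\in(0,1)$, which needs new work.

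\emph{Case $\rho=1$.} Here $\epsilon(\rho)=0$, so \cref{eqn:pi-rho} is the maximization of $\sR(\pi)$ over $\Pi(\mu,\nu)$. Assumption~\ref{asmp:no-atoms} makes $\mu$ absolutely continuous with compact support, so it has finite second moments. Proposition~\ref{prop:maximize-average-rank}(i) then gives uniqueness directly: the maximizer is the Brenier coupling $(\mathop{id},\nabla\phi_1)\sharp\mu$.

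\emph{Case $\rho\in(0,1)$: reduction to a strictly convex problem.} By Proposition~\ref{prop:maximize-average-rank}(ii), \cref{eqn:pi-rho} is equivalent to the entropic OT problem
$$\min_{\pi\in\Pi(\mu,\nu)} \int \tfrac12\|y_0-y_1\|^2\,\dd\pi+\epsilon\,\mathsf{KL}(\pi\mid\mu\otimes\nu), \qquad \epsilon=\epsilon(\rho)>0 .$$

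\emph{Existence.} Because $\cY$ is compact, $\Pi(\mu,\nu)$ is weakly compact. The cost is continuous and bounded on $\cY\times\cY$, and $\mathsf{KL}(\cdot\mid\mu\otimes\nu)$ is weakly lower semicontinuous. The objective is finite at $\mu\otimes\nu$, so a minimizer exists.

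\emph{Uniqueness.} The constraint set $\Pi(\mu,\nu)$ is convex, the transport term is linear in $\pi$, and $\mathsf{KL}(\cdot\mid\mu\otimes\nu)$ is strictly convex on its finite domain, since $t\log t$ is strictly convex. Hence the objective is strictly convex. If $\pi\neq\pi'$ were both optimal, their midpoint would give a strictly smaller value, a contradiction. This already proves uniqueness of $\pi_\rho$ as a measure, and that is all the theorem asserts.

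\emph{Identification of the potentials.} To connect with the representation \cref{eqn:sinkhorn-copula}, we also show the potentials are identified up to the shift $(\phi-c,\psi+c)$. The plan is:
\begin{itemize}
\item[(a)] Show the Sinkhorn system \eqref{eqn:sinkhorn-system} has a solution.
\item[(b)] Show that any solution yields, via \cref{eqn:sinkhorn-copula}, a coupling in $\Pi(\mu,\nu)$ satisfying first-order optimality. Strict convexity then forces this coupling to equal $\pi_\rho$.
\item[(c)] Equality of the densities $e^{(\langle y_0,y_1\rangle-\phi-\psi)/\epsilon}$ $\mu\otimes\nu$-a.e. gives $\phi(y_0)+\psi(y_1)=\phi'(y_0)+\psi'(y_1)$ a.e. Since both sides are separable functions, this forces $\phi-\phi'=c=\psi'-\psi$. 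Continuity of the potentials, which holds because they are log-integrals of continuous integrands on a compact set, upgrades a.e. equality to everywhere. Positivity of the densities guarantees the a.e. statements cover all of $\cY$.
\end{itemize}

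\emph{Existence of Sinkhorn potentials (main obstacle).} For step (a) we follow the paper's hint and use a contraction argument. Define the map $\mathcal T:\psi\mapsto\psi^{\text{new}}$ obtained by applying the two softmax updates in \eqref{eqn:sinkhorn-system} in succession. Work in the space of continuous functions modulo constants, with the oscillation (Hilbert-type) seminorm $\|f\|_{\mathrm{osc}}=\sup f-\inf f$.

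On compact $\cY$ the kernel $K(y_0,y_1)=e^{\langle y_0,y_1\rangle/\epsilon}$ satisfies
$$\frac{K(y_0,y_1)\,K(y_0',y_1')}{K(y_0,y_1')\,K(y_0',y_1)} \le e^{2\,\mathrm{diam}(\cY)^2/\epsilon}<\infty .$$
By Birkhoff's theorem on positive operators, each half-step is therefore a strict contraction in the oscillation seminorm, with factor $\tanh(\mathrm{diam}(\cY)^2/(2\epsilon))<1$. Banach's fixed point theorem then gives a unique fixed point modulo constants.

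The delicate part is that each half-step is a contraction of a different kind, which is the "asymmetric" feature the paper mentions. The softmax step is only nonexpansive in sup norm, so the strict contraction must come from the Birkhoff/Hilbert metric on the exponentiated potentials. The map must then be composed carefully, and a $\sup$–$\inf$ normalization fixed so that the iteration stays in a bounded set.
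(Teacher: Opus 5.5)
Your argument is correct, but for $\rho\in(0,1)$ it takes a genuinely different route from the paper's.

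\textbf{The paper's route.} The paper never works on the primal side. It cites standard Sinkhorn theory for existence of bounded continuous potentials. It then proves uniqueness of the potentials modulo $(\phi-c,\psi+c)$ by comparing two given solutions. Interpolating $\psi^t=(1-t)\psi^0+t\psi^1$, it shows that the $\psi\mapsto\phi$ update strictly shrinks the sup norm of a non-constant difference $h=\psi^1-\psi^0$. This uses the uniform lower bound $c_\psi$ on the Gibbs kernel and a set of positive $\nu$-measure where $|h|<(1-\delta_\psi)\|h\|_\infty$. The $\phi\mapsto\psi$ update is only shown to be sup-norm nonexpansive. That one-sided strictness is what ``asymmetric'' means, not two contractions of different kinds. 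Gauge invariance of \eqref{eqn:sinkhorn-copula} then gives uniqueness of $\pi_\rho$. The fact that the maximizer of \eqref{eqn:pi-rho} has Sinkhorn form is left implicit to standard duality.

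\textbf{What your route buys.} Your strict-convexity argument proves exactly what the theorem asserts, uniqueness of the maximizer, more directly. For $\rho<1$ it needs neither positive densities nor potentials. Your step (b) supplies the primal--dual link the paper leaves implicit. Your Birkhoff--Hopf step has a uniform factor $\tanh(\Delta/4)$. It therefore gives existence, which the paper only cites, and uniqueness of $\psi$ modulo constants in one stroke, so step (c) is redundant.

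\textbf{What the paper's route buys.} The paper's factor $\alpha_\psi$ depends on $h$, so it can only compare two given solutions and cannot yield existence. However, the same ``a positive kernel strictly contracts non-constant functions'' argument is reused to get injectivity of $I+\mathcal{K}$ in Lemma~\ref{lem:linearization-invertibility-h0} for the CLT. That reuse explains the paper's choice.

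\textbf{One detail to add.} Banach's theorem on $C(\mathcal{Y})$ modulo constants only yields $\mathcal{T}\psi=\psi+c$. To get $c=0$, let $\phi$ be the intermediate update. Compute the total mass of $e^{(\langle y_0,y_1\rangle-\phi(y_0)-\mathcal{T}\psi(y_1))/\epsilon}\,\mu\otimes\nu$ in two ways. Integrating first in $y_0$ gives mass $1$. Integrating first in $y_1$ gives mass $e^{-c/\epsilon}$. Hence $c=0$, and $(\phi,\psi)$ genuinely solves \eqref{eqn:sinkhorn-system}.
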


\begin{proof}[Proof of Theorem~\ref{thm:identifiability}]
The case $\rho = 1$ follows from Brenier's theorem \citep{brenier1991polar}. We hence consider $\rho \in (0,1)$ and denote $\epsilon := \epsilon(\rho) > 0$. By standard Sinkhorn theory for strictly positive continuous costs on compact spaces, there exists a bounded continuous pair $(\phi_\rho, \psi_\rho)$ solving
\begin{align*}
\phi_\rho(y_0) &= \epsilon \log \int_{\cY} \exp\Bigl( \frac{\langle y_0, y_1 \rangle - \psi_{\rho}(y_1)}{\epsilon} \Bigr) \dd \nu(y_1) \;, \\
\psi_\rho(y_1) &= \epsilon \log \int_{\cY} \exp\Bigl( \frac{\langle y_1, y_0 \rangle - \phi_\rho(y_0)}{\epsilon} \Bigr) \dd \mu(y_0) \;.
\end{align*}
It therefore suffices to prove uniqueness of such a pair up to additive constants: namely, the pair $(\phi_\rho, \psi_\rho)$ is the unique solution to the above system up to the transformation $(\phi_\rho, \psi_\rho) \mapsto (\phi_\rho - c, \psi_\rho + c)$ for any $c \in \R$.

Let $(\phi^0, \psi^0)$ and $(\phi^1, \psi^1)$ be two bounded continuous solutions. Define
\begin{align*}
h(y_1) := \psi^1(y_1) - \psi^0(y_1) \;, \qquad M_\psi = \|h\|_\infty := \sup_{y_1 \in \cY} |h(y_1)| \;.
\end{align*}
We claim $h$ must be constant; suppose for contradiction that $h$ is non-constant.

For $t \in [0,1]$, define
\begin{align*}
\psi^t := (1-t)\psi^0 + t\psi^1 \;,
\qquad
\phi^t(y_0) := \epsilon \log \int_{\cY} \exp\Bigl( \frac{\langle y_0, y_1 \rangle - \psi^t(y_1)}{\epsilon} \Bigr) \dd \nu(y_1) \;.
\end{align*}
Since $\psi^0,\psi^1$ are bounded and continuous and $\cY$ is compact, differentiation is justified, and for each $y_0 \in \cY$,
\begin{align*}
\phi^1(y_0) - \phi^0(y_0)
= \int_0^1 \frac{\partial \phi^t(y_0)}{\partial t} \dd t
= -\int_0^1 \int_{\cY} h(y_1) w_t(y_0,\dd y_1) \dd t \;,
\end{align*}
where
\begin{align*}
w_t(y_0,\dd y_1)
:= \frac{\exp\bigl( (\langle y_0,y_1\rangle - \psi^t(y_1))/\epsilon \bigr)}{\int_{\cY} \exp\bigl( (\langle y_0,y'_1\rangle - \psi^t(y'_1))/\epsilon \bigr) \dd \nu(y'_1)} \dd \nu(y_1)
\end{align*}
is a probability measure on $\cY$.

Because $h$ is continuous and non-constant on the compact set $\cY$, there exist $\delta_\psi \in (0,1)$ and a nonempty open set
\begin{align*}
A_\psi := \{y_1 \in \cY: |h(y_1)| < (1-\delta_\psi) M_\psi\}
\end{align*}
with $\nu(A_\psi) > 0$. Next, let
\begin{align*}
R := \sup_{y_0,y_1 \in \cY} |\langle y_0,y_1\rangle| < \infty \;,
\qquad
B_\psi := \max\{\|\psi^0\|_\infty, \|\psi^1\|_\infty\} \;.
\end{align*}
Then for every $t \in [0,1]$ and $y_0,y_1 \in \cY$,
\begin{align*}
\exp\Bigl( \frac{-R-B_\psi}{\epsilon} \Bigr)
\leq
\exp\Bigl( \frac{\langle y_0,y_1\rangle - \psi^t(y_1)}{\epsilon} \Bigr)
\leq
\exp\Bigl( \frac{R+B_\psi}{\epsilon} \Bigr) \;.
\end{align*}
Hence, for every measurable $A \subset \cY$,
\begin{align*}
w_t(y_0,A)
\geq
\exp\Bigl( -\frac{2(R+B_\psi)}{\epsilon} \Bigr) \nu(A)
=: c_\psi \nu(A) \;,
\end{align*}
uniformly in $y_0$ and $t$. Therefore,
\begin{align*}
\bigl|\phi^1(y_0)-\phi^0(y_0)\bigr|
&\leq \int_0^1 \int_{\cY} |h(y_1)| w_t(y_0,\dd y_1) \dd t \\
&\leq \int_0^1 (1-\delta_\psi)M_\psi  w_t(y_0,A_\psi)  + M_\psi  w_t(y_0,\cY \setminus A_\psi) \dd t \\
&\leq \bigl( 1 - \delta_\psi c_\psi \nu(A_\psi) \bigr) M_\psi \;.
\end{align*}
Taking the supremum over $y_0$ gives
\begin{align}
  \label{eqn:phi-contraction}
\|\phi^1 - \phi^0\|_\infty \leq \alpha_\psi \|\psi^1 - \psi^0\|_\infty \;,
\qquad
\alpha_\psi := 1 - \delta_\psi c_\psi \nu(A_\psi) < 1 \;.
\end{align}

The same interpolation applied to the second fixed-point equation gives, for each $y_1 \in \cY$,
\begin{align*}
\psi^1(y_1) - \psi^0(y_1) = -\int_0^1 \int_{\cY} \bigl(\phi^1(y_0) - \phi^0(y_0)\bigr)\, v_t(y_1, \dd y_0)\, \dd t \;,
\end{align*}
where $v_t(y_1, \dd y_0)$ is the analogous probability measure with $(\mu,\phi)$ in place of $(\nu,\psi)$. Since $v_t(y_1,\cdot)$ is a probability measure,
\begin{align*}
\|\psi^1 - \psi^0\|_\infty \leq \|\phi^1 - \phi^0\|_\infty \;.
\end{align*}
Combining with \eqref{eqn:phi-contraction} yields
\begin{align*}
\|\psi^1 - \psi^0\|_\infty \leq \alpha_\psi \|\psi^1 - \psi^0\|_\infty \;.
\end{align*}
Since $\alpha_\psi < 1$ and $M_\psi > 0$ ($\psi^0 \neq \psi^1$), this is a contradiction. Therefore $h$ must be constant, i.e., $\psi^1 - \psi^0 \equiv c$ for some $c \in \R$.

Substituting $\psi^1 = \psi^0 + c$ into the first fixed-point equation gives, for every $y_0 \in \cY$,
\begin{align*}
\phi^1(y_0) = \epsilon \log \int_{\cY} \exp\Bigl( \frac{\langle y_0,y_1\rangle - \psi^0(y_1) - c}{\epsilon} \Bigr) \dd \nu(y_1) = \phi^0(y_0) - c \;.
\end{align*}
Hence any two solutions differ by additive constants of opposite sign.

Finally, the coupling defined by \cref{eqn:sinkhorn-copula} is invariant under the transformation $(\phi,\psi) \mapsto (\phi-c, \psi+c)$. Therefore the resulting coupling $\pi_\rho$ is uniquely determined by $(\mu,\nu)$. This proves identifiability for $\rho \in (0,1)$, and the case $\rho=1$ was treated above.
\end{proof}

\subsection{Limit Theorem}

Suppose $(Y_{0,i})_{i=1}^n \stackrel{i.i.d.}{\sim} \mu$ and $(Y_{1,i})_{i=1}^n \stackrel{i.i.d.}{\sim} \nu$ are independent samples. Let $\mu_n, \nu_n$ be the corresponding empirical measures, $\mu_n := \frac{1}{n} \sum_{i=1}^n \delta_{Y_{0,i}}$ and $\nu_n := \frac{1}{n} \sum_{i=1}^n \delta_{Y_{1,i}}$. Let $F_n, G_n$ be the empirical CDFs of $\mu_n, \nu_n$, and let $(\phi_n, \psi_n)$ be the empirical Sinkhorn potentials constructed from $\mu_n, \nu_n$ as in \cref{eqn:sinkhorn-system}.

We establish central limit theorems (CLT) for the empirical potentials $\phi_n \oplus \psi_n (y_0, y_1) := \phi_n(y_0) + \psi_n(y_1)$ constructed from the marginal empirical measures $\mu_n, \nu_n$. These complement the identifiability result of \cref{thm:identifiability}: the Bregman-Sinkhorn copula is not only nonparametrically identified but also estimable at the parametric $\sqrt{n}$-rate, with a Gaussian process limit distribution.

The following two theorems use different proof strategies. For $\rho = 1$, the empirical rank-preserving map $G_n^{-1} \circ F_n$ is Hadamard differentiable as a composition of quantile functional, and the result follows from the functional delta method \citep[Theorem 3.9.4]{van1996weak}. For $\rho \in (0,1)$, the Sinkhorn dual potentials are defined implicitly by the fixed-point system \eqref{eqn:sinkhorn-system}; the main technical work is to establish (i) the Fr\'echet derivative of the Sinkhorn operator with respect to $h = \phi \oplus \psi$, and (ii) that the linearized operator (the derivative) is a bounded linear isomorphism on a gauge-centered Banach subspace, using a variant of the argument in Theorem~\ref{thm:identifiability}. The CLT then follows by applying the implicit function theorem in Banach spaces. We defer the proofs to the appendix.

We first establish the limit theorem for the rank-preserving case $\rho=1$, where $G_n^{-1} \circ F_n$ and $G^{-1} \circ F$ denote the empirical and population couplings, respectively.

\begin{theorem}[CLT for empirical approximation, $\rho = 1$]
  \label{thm:limit-theorem-rho-1}
  Let $\cY \subset \R$ be a bounded interval. Denote $\ell^\infty(\cY)$ as the space of bounded measurable functions. Suppose $(Y_{0,i})_{i=1}^n \stackrel{i.i.d.}{\sim} \mu$ and $(Y_{1,i})_{i=1}^n \stackrel{i.i.d.}{\sim} \nu$ are independent samples. Assume $F$ and $G$ are continuously differentiable on $\cY$ with densities $f,g$, and
  \begin{align*}
    0 < \inf_{y \in \cY} f(y) \leq \sup_{y \in \cY} f(y) < \infty \;, \qquad
    0 < \inf_{y \in \cY} g(y) \leq \sup_{y \in \cY} g(y) < \infty \;.
  \end{align*}
  Then, in $\ell^\infty(\cY)$, we have the following limit:
\begin{align*}
\sqrt{n} \, \bigl( G_n^{-1} \circ F_n(y) - G^{-1} \circ F(y) \bigr) \rightsquigarrow \frac{\sqrt{2}\,\bB(F(y))}{g(G^{-1}(F(y)))} \;,
\end{align*}
for a standard Brownian bridge $\bB$.
\end{theorem}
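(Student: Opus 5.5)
We use the functional delta method \citep[Theorem 3.9.4]{van1996weak} for the map $(F,G) \mapsto G^{-1}\circ F$, applied to the two independent empirical processes.

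\textbf{Step 1: joint weak convergence of the marginals.} By Donsker's theorem, $\sqrt{n}(F_n - F) \rightsquigarrow \bB_0 \circ F$ and $\sqrt{n}(G_n - G) \rightsquigarrow \bB_1 \circ G$ in $\ell^\infty(\cY)$. Here $\bB_0,\bB_1$ are standard Brownian bridges. Because the two samples are independent, the convergence holds jointly in $\ell^\infty(\cY)\times\ell^\infty(\cY)$ with $\bB_0 \indep \bB_1$.

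\textbf{Step 2: Hadamard differentiability of the composition.} Write $\Phi(F,G) := G^{-1}\circ F$, and treat it as the composition of two maps.
\begin{itemize}
\item The inverse map $G \mapsto G^{-1}$. Since $g$ is continuous and bounded away from zero on the interval, this map is Hadamard differentiable tangentially to $C(\cY)$ \citep[Lemma 3.9.23]{van1996weak}. Its derivative is $\beta \mapsto -\beta\circ G^{-1}/g\circ G^{-1}$, uniformly on $[0,1]$.
\item The composition map $(Q,F)\mapsto Q\circ F$. This is Hadamard differentiable at $(Q,F)$ when $Q$ is continuously differentiable \citep[Lemma 3.9.27]{van1996weak}, which holds because $G^{-1}$ has derivative $1/g\circ G^{-1}$, continuous under the density assumptions. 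Its derivative is $(\gamma,\alpha)\mapsto \gamma\circ F + (Q'\circ F)\,\alpha$.
\end{itemize}
By the chain rule, the derivative of $\Phi$ at $(F,G)$ in direction $(\alpha,\beta)$ is
\[
\Phi'_{(F,G)}(\alpha,\beta)(y) = \frac{\alpha(y) - \beta\bigl(G^{-1}(F(y))\bigr)}{g\bigl(G^{-1}(F(y))\bigr)} \;.
\]
We need the tangential sets to contain the limits. Brownian bridges have continuous paths, and $F,G$ are continuous, so $\bB_0\circ F$ and $\bB_1\circ G$ lie in $C(\cY)$ almost surely.

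\textbf{Step 3: identify the limit.} Substitute $\alpha = \bB_0\circ F$ and $\beta = \bB_1\circ G$. Then $\beta(G^{-1}(F(y))) = \bB_1(F(y))$, so the limit is
\[
\frac{\bB_0(F(y)) - \bB_1(F(y))}{g(G^{-1}(F(y)))} \;.
\]
The process $(\bB_0 - \bB_1)/\sqrt{2}$ is centered Gaussian with covariance $s\wedge t - st$, so it is again a standard Brownian bridge $\bB$. The limit therefore equals $\sqrt{2}\,\bB(F(y))/g(G^{-1}(F(y)))$, as claimed.

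\textbf{Main obstacle.} Two points need care.
\begin{itemize}
\item \emph{Boundary behaviour of the quantile map.} The quantile map must be handled uniformly on all of $[0,1]$, including the endpoints. The compact support with density bounded below makes $G^{-1}$ Lipschitz and continuously differentiable up to the boundary, which is the condition Lemma 3.9.23 needs for uniform convergence.
\item \emph{Domain of the empirical inverse.} We must check that $G_n^{-1}\circ F_n$ fits the domain of $\Phi$. Here $G_n$ is a step distribution function, so $G_n^{-1}$ is taken as the generalized left-continuous inverse, and $F_n$ takes values in $[0,1]$. Both are covered by the versions of the lemmas for distribution functions on an interval.
\end{itemize}
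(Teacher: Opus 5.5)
Your proposal is correct and is essentially the paper's argument: joint weak convergence from Donsker's theorem plus independence of the two samples, Hadamard differentiability of the composition map at $(F, G^{-1})$, the functional delta method, and merging two independent Brownian bridges into $\sqrt{2}\,\bB$. The only difference is that the paper takes the quantile-process limit for $G_n^{-1}$ as a known result, while you derive it from the inverse-map derivative via the chain rule. Your route also gives the correct sign $-\bB_1$, which the paper's display omits (harmless in law). One caveat: the standard quantile lemma is stated on $[p,q]\subset(0,1)$, so uniformity up to the endpoints does not follow just from $G^{-1}$ being smooth up to the boundary. It comes from restricting the tangent directions to continuous functions that vanish at the endpoints of $\cY$, which the bridge limits do.
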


Now we establish the limit theorem for the $\rho \in (0, 1)$ case. We prove for the general multivariate case $\cY \subset \R^d$ with $d \geq 1$ under Assumption~\ref{asmp:no-atoms}. Limit theorems for entropic OT have been established in the recent literature \citep{gonzalez2022weak, goldfeld2024limit}, here we provide a self-contained analysis.

\begin{theorem}[CLT for empirical approximation, $\rho \in (0,1)$]
  \label{thm:limit-theorem-rho-less-than-1}
  Let Assumption \ref{asmp:no-atoms} hold and consider $\rho \in (0,1)$. Denote $\cC(\cY)$ as the space of continuous functions. Suppose $(Y_{0,i})_{i=1}^n \stackrel{i.i.d.}{\sim} \mu$ and $(Y_{1,i})_{i=1}^n \stackrel{i.i.d.}{\sim} \nu$ are independent samples. Let $(\phi_\rho,\psi_\rho)$ be Sinkhorn dual potentials for $(\mu,\nu)$, and $(\phi_n,\psi_n)$ the corresponding empirical dual potentials for $(\mu_n,\nu_n)$, as in \cref{eqn:sinkhorn-system}.

  Then, in $C(\cY) \oplus C(\cY) \subset C(\cY \times \cY)$,
  \begin{align*}
    \sqrt{n} \, \bigl( \phi_n \oplus \psi_n - \phi_\rho \oplus \psi_\rho \bigr) \rightsquigarrow \bZ_\rho := \bZ_{0, \rho} \oplus \bZ_{1, \rho} \;,
  \end{align*}
  where $(\bZ_{0,\rho}, \bZ_{1,\rho})$ is a jointly mean-zero Gaussian process in $C(\cY) \times C(\cY)$.
\end{theorem}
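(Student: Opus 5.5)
The plan is to write the Sinkhorn system as a fixed point $\Phi(h;\mu,\nu)=0$ in the gauge-centered Banach space and apply the implicit function theorem together with the functional delta method. The paper outlines exactly this route, so the work is in making each step concrete.

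\emph{Setup.} Let $\epsilon=\epsilon(\rho)$, write $c(y_0,y_1)=\langle y_0,y_1\rangle$, and put $E:=\{(\phi,\psi)\in C(\cY)\times C(\cY): \int\phi\,\dd\mu=0\}$, a closed subspace that removes the gauge $(\phi-c,\psi+c)$. Take the parameter to be the pair of marginals, viewed as elements of $\ell^\infty(\cF)$ for a suitable function class $\cF$. Define
\[
\Phi(\phi,\psi;\mu,\nu)=\Bigl(\phi-\epsilon\log\!\int e^{(c(\cdot,y_1)-\psi(y_1))/\epsilon}\dd\nu(y_1),\;\psi-\epsilon\log\!\int e^{(c(y_0,\cdot)-\phi(y_0))/\epsilon}\dd\mu(y_0)\Bigr).
\]
The empirical potentials then solve $\Phi(\phi_n,\psi_n;\mu_n,\nu_n)=0$. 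Pointwise evaluation against $\mu_n$ is harmless here because the integrands are smooth in $y_0$.

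\emph{Step 1 (Fréchet differentiability).} Because $\cY$ is compact and all exponents are bounded, $\Phi$ is $C^1$ in $(\phi,\psi)$. Its derivative at the population solution is $D_h\Phi[(a,b)]=(a+P b,\; b+Q a)$. Here $P b(y_0)=\int b\,\dd\pi_\rho(\cdot\mid y_0)$ and $Q a(y_1)=\int a\,\dd\pi_\rho(\cdot\mid y_1)$ are the conditional-expectation Markov operators of $\pi_\rho$.

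Dependence on the measures enters linearly inside the log-integrals. Its derivative in direction $(\mu_n-\mu,\nu_n-\nu)$ is therefore $-\epsilon\int e^{(c-\psi_\rho-\phi_\rho)/\epsilon}\dd(\nu_n-\nu)$, and analogously for $\mu$.

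The key point is that the function classes $\{y_1\mapsto e^{(\langle y_0,y_1\rangle-\psi(y_1))/\epsilon}: y_0\in\cY,\ \psi \text{ near }\psi_\rho\}$ are uniformly Lipschitz in the index $y_0$ and bounded, hence $\mu$- and $\nu$-Donsker. Then $\sqrt n(\mu_n-\mu,\nu_n-\nu)$ converges in $\ell^\infty(\cF)$ to independent Brownian-bridge processes $(\bG_0,\bG_1)$ with continuous paths in $y_0$.

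\emph{Step 2 (invertibility, the main obstacle).} I must show that $L:=D_h\Phi$ is a bounded isomorphism of $E$ onto its range. Concretely, I would show that the kernel of $L$ is exactly the gauge direction and that $L$ is Fredholm.

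For the kernel: if $a=-Pb$ and $b=-Qa$, then $b=QPb$. Since $QP$ is a Markov operator with kernel density bounded below by $c_\psi>0$ (the Doeblin-type bound from the proof of Theorem~\ref{thm:identifiability}), the argument of that proof gives $\|QPb-\text{const}\|_\infty\le\alpha\|b-\text{const}\|_\infty$ with $\alpha<1$. So $b$ is constant, $a=-b$, and the gauge normalization forces $a=b=0$.

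For Fredholmness: $P$ and $Q$ are integral operators with continuous kernels on a compact set, hence compact on $C(\cY)$. Thus $L=I+K$ with $K$ compact, and the Fredholm alternative gives a bounded inverse on the appropriate complement. Quotienting out the gauge handles the one-dimensional cokernel, which comes from the mass constraint.

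\emph{Step 3 (conclusion).} Consistency follows first. The estimate $\|\phi_n-\phi_\rho\|_\infty\to0$ can be obtained from the same contraction plus uniform convergence of the log-integrals, using the Glivenko--Cantelli property of the Step~1 classes. The implicit function theorem, in the form of a Z-estimator theorem (van der Vaart--Wellner Thm.~3.3.1), then gives
\[
\sqrt n(h_n-h_\rho)=-L^{-1}D_{(\mu,\nu)}\Phi[\sqrt n(\mu_n-\mu,\nu_n-\nu)]+o_P(1).
\]
The right-hand side is a continuous linear image of $(\bG_0,\bG_1)$, so by the continuous mapping theorem it converges to a mean-zero Gaussian process in $C(\cY)\times C(\cY)$. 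Mapping $(a,b)\mapsto a\oplus b$ into $C(\cY\times\cY)$ yields $\bZ_\rho$. Note that $\phi\oplus\psi$ is gauge-invariant, so the normalization plays no role in the final statement.
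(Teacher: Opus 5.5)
Your proposal is correct and follows essentially the paper's route: you linearize the Sinkhorn fixed-point map as the identity plus the compact conditional-expectation operators of $\pi_\rho$, get injectivity from the Doeblin-type contraction of the identifiability proof and a bounded inverse from the Fredholm alternative, and push the Donsker limit of the measure-derivative forcing term through that inverse. The only difference is packaging: you invert on the gauge-fixed pair space $E$ onto the closed codimension-one range $\{(f,g):\int f\,\dd\mu=\int g\,\dd\nu\}$, which suffices for the Z-estimator theorem, whereas the paper encodes $h=\phi\oplus\psi$ in $\cH_0$, where the gauge and the mass constraint cancel and $\cA$ is an automorphism. One detail to tighten: for stochastic equicontinuity, let $\psi$ range over a H\"older ball of smooth potentials, which contains the log-sum-exp empirical and population potentials, rather than a sup-norm neighborhood of $\psi_\rho$, since the latter class is not Donsker.
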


\section{Application to Causal Inference}

This section applies the Bregman-Sinkhorn copula to three problems in causal inference: (i) Section~\ref{sec:application} on estimation of the treatment effect distribution (TED), (ii) Section~\ref{sec:variance-functional} on estimation of the variance of the ATE estimator under rank violations, and (iii) Section~\ref{sec:covariate-adjustment} on extending the framework to observational studies via covariate adjustment.

We adopt the potential outcomes framework. Let $(Y_0, Y_1) \in \cY \times \cY$ denote the potential outcomes under control and treatment, respectively, and let $T \in \{0, 1\}$ be a binary treatment indicator. The observed outcome is $Y_i = T_i Y_{1,i} + (1-T_i) Y_{0,i}$, and we observe $n$ i.i.d.\ copies $\{(Y_i, T_i)\}_{i=1}^n$. The object of interest is the treatment effect distribution (TED) $\cL(Y_1 - Y_0)$ and functionals thereof, including the variance of the Horvitz-Thompson estimator of the ATE.

\subsection{Treatment Effect Distribution}
\label{sec:application}

We first work in the experimental setting where $T \sim \mathrm{Bern}(1/2)$ is independent of $(Y_0, Y_1)$. The joint distribution of potential outcomes is assumed to follow the Bregman-Sinkhorn copula: $(Y_0, Y_1) \sim \pi_\rho \in \Pi(\mu, \nu)$, where $\mu$ and $\nu$ are the marginal distributions of $Y_0$ and $Y_1$, respectively.

The Bregman-Sinkhorn copula $\widehat{\pi}_\rho$ is estimated by solving the Sinkhorn system with empirical marginals
\begin{align*}
\mu_n(y_0) = \frac{1}{n_{\sC}} \sum_{i \in \sC} \delta_{Y_i \mid T_i = 0}(y_0) \;, \qquad \nu_n(y_1) = \frac{1}{n_{\sT}} \sum_{i \in \sT} \delta_{Y_i \mid T_i = 1}(y_1) \;,
\end{align*}
where $\sC := \{i : T_i = 0\}$ and $\sT := \{i : T_i = 1\}$ are the control and treatment groups, with sizes $n_{\sC}$ and $n_{\sT}$, respectively. Since $T \indep (Y_0, Y_1)$, we have $\cL(Y \mid T=0) = \cL(Y_0)$ and $\cL(Y \mid T=1) = \cL(Y_1)$, so the empirical marginals $\mu_n, \nu_n$ are consistent for $\mu, \nu$.

The Sinkhorn system \eqref{eqn:sinkhorn-system} is solved iteratively until convergence, yielding empirical dual potentials $(\phi_n, \psi_n)$. The empirical copula $\widehat{\pi}_\rho$ is then obtained by substituting $(\phi_n, \psi_n)$ and $(\mu_n, \nu_n)$ into \cref{eqn:sinkhorn-copula}.

The TED is then estimated by pushing $\widehat{\pi}_\rho$ forward through the map $(y_0, y_1) \mapsto y_1 - y_0$:
$$
\widehat{\cL_\rho(Y_1 - Y_0)} := (y_1 - y_0) \sharp \widehat{\pi}_\rho \in \sP(\cY) \;.
$$

\begin{figure}[htbp]
  \centering
  %% Row 1: ATE distribution
  \begin{subfigure}{0.49\linewidth}
    \includegraphics[width=\linewidth]{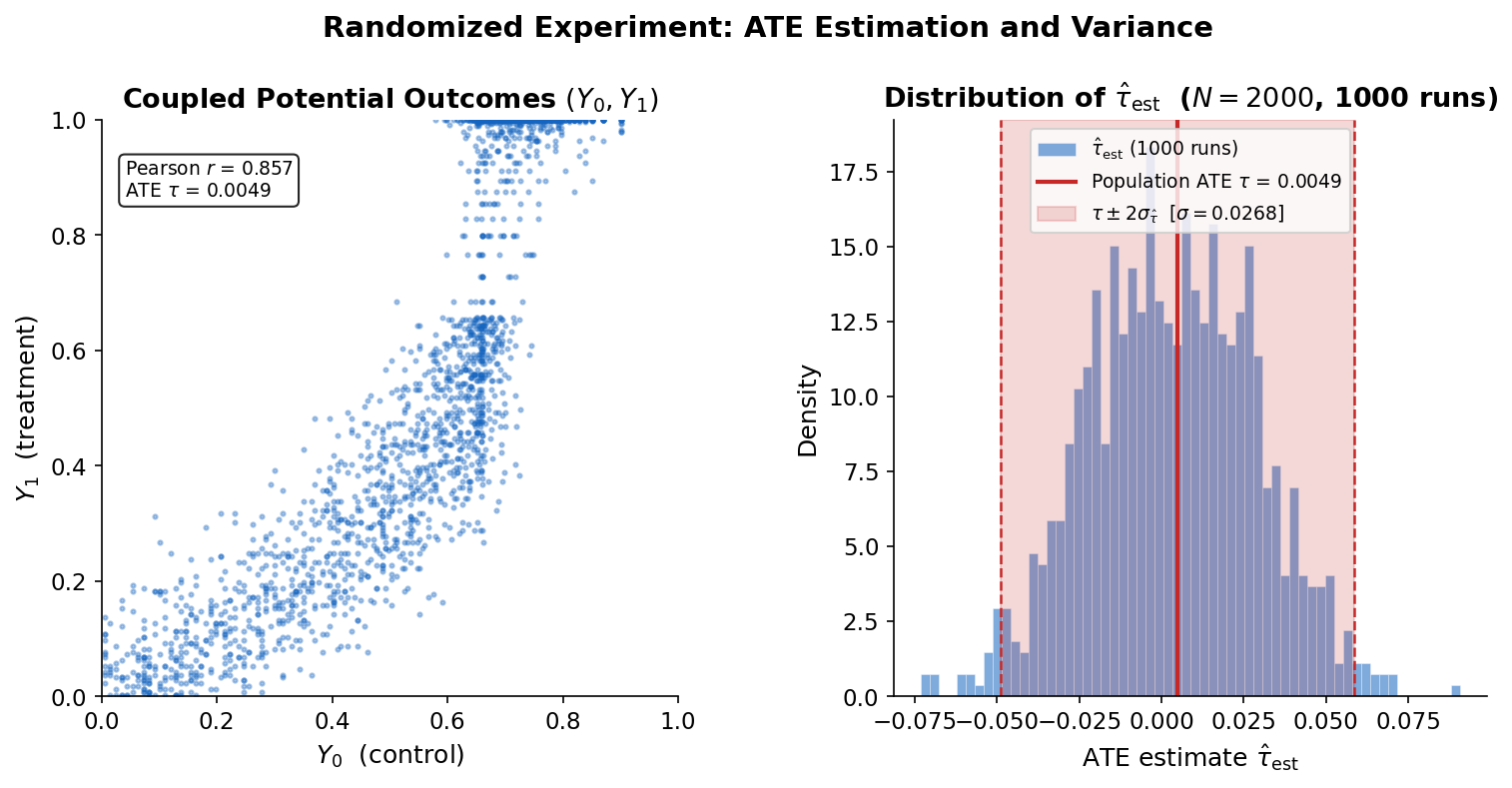}
    \caption{Bregman-Sinkhorn copula DGP, ATE distribution.}
    \label{fig:ate-sinkhorn}
  \end{subfigure}\hfill
  \begin{subfigure}{0.49\linewidth}
    \includegraphics[width=\linewidth]{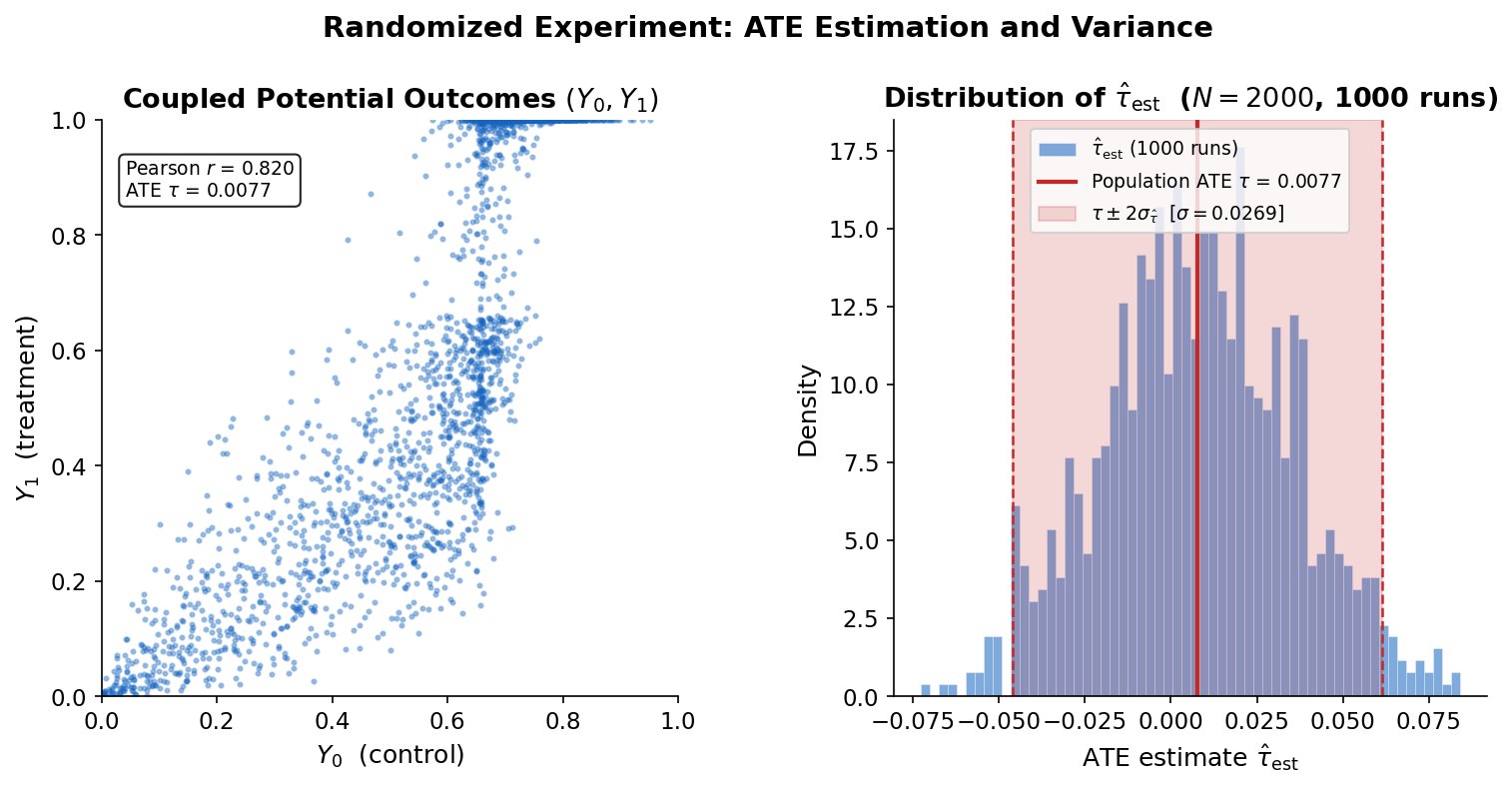}
    \caption{Gaussian copula DGP, ATE distribution.}
    \label{fig:ate-gaussian}
  \end{subfigure}\\[4pt]
  %% Row 2: TED imputed, Sinkhorn
  \begin{subfigure}{\linewidth}
    \includegraphics[width=\linewidth]{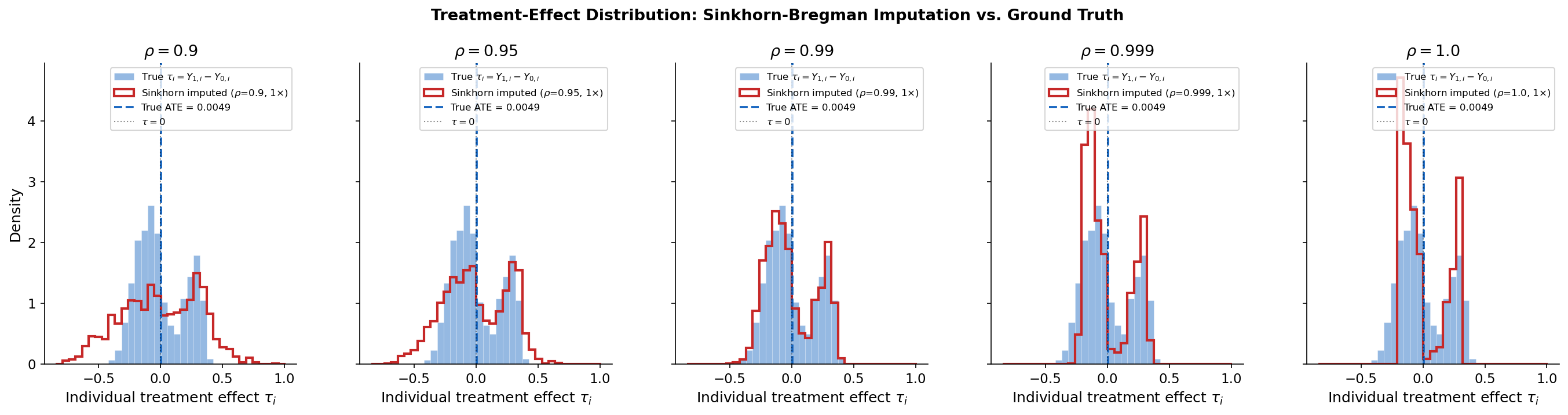}
    \caption{Imputed TED $\widehat{\mathcal{L}_\rho(Y_1-Y_0)}$, Bregman-Sinkhorn copula DGP, varying $\rho$.}
    \label{fig:ted-sinkhorn}
  \end{subfigure}\\[4pt]
  %% Row 3: TED imputed, Gaussian
  \begin{subfigure}{\linewidth}
    \includegraphics[width=\linewidth]{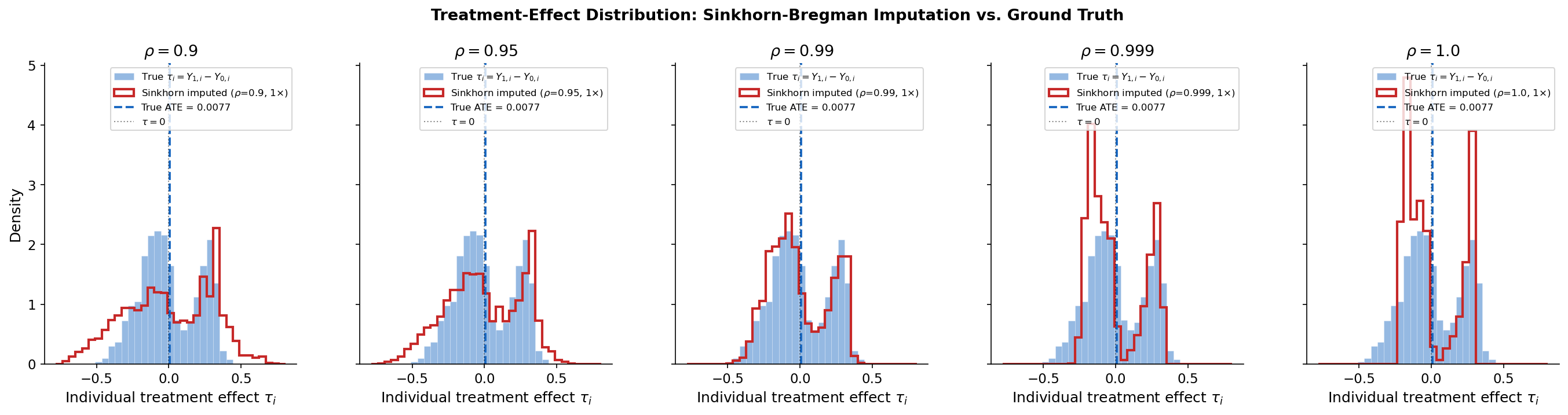}
    \caption{Imputed TED $\widehat{\mathcal{L}_\rho(Y_1-Y_0)}$, Gaussian copula DGP, varying $\rho$.}
    \label{fig:ted-gaussian}
  \end{subfigure}
  \caption{\small Numerical illustrations. \textit{Row 1}: ATE estimator distribution under each copula DGP. \textit{Rows 2--3}: imputed treatment effect distributions across stickiness levels $\rho$ under the Bregman-Sinkhorn copula DGP (row 2) and Gaussian copula DGP (row 3).}
  \label{fig:copula-illustrations}
\end{figure}

\Cref{fig:copula-illustrations} illustrates the imputed TED $\widehat{\cL_\rho(Y_1 - Y_0)}$ under two data-generating processes (DGPs): a Bregman-Sinkhorn copula DGP with $(F(Y_0), G(Y_1)) \sim c_{\mathsf{S}}^{\rho_1=0.99}$ as in \cref{eqn:sinkhorn-copula-density}, and a Gaussian copula DGP with $(F(Y_0), G(Y_1)) \sim c_{\mathsf{G}}^{\rho_2=0.90}$ as in \cref{eqn:gaussian-copula-density}. Both DGPs exhibit non-trivial rank violations, as evidenced by \cref{fig:ate-sinkhorn,fig:ate-gaussian}. The top row of \cref{fig:copula-illustrations} displays the sampling distribution of the ATE estimator $\widehat{\tau} = n_{\sT}^{-1} \sum_{i \in \sT} Y_i - n_{\sC}^{-1} \sum_{i \in \sC} Y_i$ across 1000 runs of the treatment assignment $T$, holding the potential outcomes fixed; the population ATE is a small positive constant in both cases.

For each DGP, $n = 2000$ i.i.d.\ draws of $(Y_0, Y_1, T)$ are generated, and the imputed TED $\widehat{\cL_\rho(Y_1 - Y_0)}$ is computed for stickiness levels $\rho \in \{0.90, 0.95, 0.99, 0.999, 1.00\}$. Rows 2--3 of \cref{fig:copula-illustrations} compare the imputed TED $\widehat{\cL_\rho(Y_1 - Y_0)}$ against the true, unobserved $\cL_\rho(Y_1 - Y_0)$ for each DGP across these values of $\rho$.

Both DGPs are calibrated so that the population ATE $\tau := n^{-1}\sum_{i=1}^n (Y_{1,i} - Y_{0,i})$ is small and positive, yet the TED is bimodal: approximately $60\%$ of units experience a negative individual treatment effect and $40\%$ experience a positive one. The imputed TED $\widehat{\cL_\rho(Y_1 - Y_0)}$ recovers this bimodal structure across all stickiness levels $\rho$, and a well-chosen $\rho$ (e.g., $\rho = 0.99$) closely approximates the true population TED. By contrast, the ATE estimator fails to capture this heterogeneity and, under the couplings considered here, exhibits substantial variance, rendering ATE-based inference uninformative.

The comonotonic coupling ($\rho = 1$) underestimates treatment effect heterogeneity by producing an overly concentrated TED. The Bregman-Sinkhorn copula with $\rho < 1$ permits rank violations, interpolating between the comonotonic coupling ($\rho = 1$, which over-shrinks the TED) and the independence coupling ($\rho \to 0$, which over-disperses it), and recovering the full heterogeneity at intermediate values of $\rho$.

\subsection{Variance Functional of ATE Estimator}
\label{sec:variance-functional}
The Horvitz-Thompson estimator of the ATE and its exact variance are
\begin{align*}
  \widehat{\tau} := \frac{2}{n} \sum_{i=1}^n Y_i T_i -  \frac{2}{n} \sum_{i=1}^n  Y_i (1-T_i)  \;, \qquad
  \mathop{Var}(\widehat{\tau}) = \frac{1}{n^2} \sum_{i=1}^n (Y_{0,i} + Y_{1,i})^2 \;.
\end{align*}
Since $\mathop{Var}(\widehat{\tau})$ depends on the unobserved joint distribution of $(Y_0, Y_1)$, it must be bounded or imputed under structural assumptions. Under the structural assumption $(Y_0, Y_1) \sim \pi_\rho$, the Bregman-Sinkhorn copula, the conditional moment formulas of Proposition~\ref{prop:conditional-moments} yield a tractable expression for imputation.

\begin{proposition}[Variance functional]
\label{prop:variance-functional}
Under the Bregman-Sinkhorn copula $(Y_0, Y_1) \sim \pi_\rho$,
\begin{align*}
\E\bigl[(Y_{0,i} + Y_{1,i})^2\bigr] =
 \E\bigl[ \bigl(Y_{0,i} + \dot\phi_\rho(Y_{0,i})\bigr)^2 + \epsilon(\rho)\,\ddot\phi_\rho(Y_{0,i}) \bigr] =
 \E\bigl[ \bigl(Y_{1,i} + \dot\psi_\rho(Y_{1,i})\bigr)^2 + \epsilon(\rho)\,\ddot\psi_\rho(Y_{1,i}) \bigr] \;.
\end{align*}
The resulting plug-in estimator of $\mathop{Var}(\widehat{\tau})$ is
\begin{align*}
  \widehat{\mathop{Var}}_{\mathsf{SB}}(\widehat{\tau}) := \frac{1}{n^2} \sum_{i=1}^n \Bigl\{ \bigl[Y_i + \dot\phi_\rho(Y_i)\bigr]^2 + \epsilon(\rho)\,\ddot\phi_\rho(Y_i) \Bigr\} T_i
  + \Bigl\{ \bigl[Y_i + \dot\psi_\rho(Y_i)\bigr]^2 + \epsilon(\rho)\,\ddot\psi_\rho(Y_i) \Bigr\} (1-T_i) \;.
\end{align*}
\end{proposition}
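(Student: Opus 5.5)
The plan is to condition on one coordinate and apply the conditional moment formulas of Proposition~\ref{prop:conditional-moments} in the scalar case $\cY \subset \R$. Write
\begin{align*}
(Y_0 + Y_1)^2 = Y_0^2 + 2 Y_0 Y_1 + Y_1^2 \;.
\end{align*}
By the tower property, $\E[(Y_0+Y_1)^2] = \E\bigl[\E[(Y_0+Y_1)^2 \mid Y_0]\bigr]$.

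Conditionally on $Y_0 = y_0$, we have
\begin{align*}
\E[(y_0 + Y_1)^2 \mid Y_0 = y_0] = \bigl(y_0 + \E[Y_1 \mid Y_0=y_0]\bigr)^2 + \mathop{Var}[Y_1 \mid Y_0 = y_0] \;.
\end{align*}
This is the elementary identity $\E[(a+X)^2] = (a+\E X)^2 + \mathop{Var} X$. Proposition~\ref{prop:conditional-moments} gives $\E[Y_1\mid Y_0=y_0] = \dot\phi_\rho(y_0)$ and $\mathop{Var}[Y_1\mid Y_0=y_0] = \epsilon(\rho)\ddot\phi_\rho(y_0)$. Substituting and taking the outer expectation over $Y_0 \sim \mu$ yields the first equality.

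For the second equality, I use the symmetry of the Sinkhorn system \eqref{eqn:sinkhorn-system} and of the density \eqref{eqn:sinkhorn-copula}. Exchanging the roles of $(\mu,\phi_\rho)$ and $(\nu,\psi_\rho)$, the conditional law of $Y_0$ given $Y_1 = y_1$ is a natural exponential family with natural parameter $y_1/\epsilon(\rho)$ and log-partition function $\psi_\rho(y_1)/\epsilon(\rho)$. Differentiating the second Sinkhorn equation under the integral, which is justified on compact $\cY$ with bounded continuous potentials, gives $\E[Y_0\mid Y_1=y_1] = \dot\psi_\rho(y_1)$ and $\mathop{Var}[Y_0\mid Y_1=y_1] = \epsilon(\rho)\ddot\psi_\rho(y_1)$. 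This is the mirror of Proposition~\ref{prop:conditional-moments}. Repeating the conditioning argument with $Y_1$ in place of $Y_0$ gives the third expression.

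For the plug-in estimator, I first note that $\mathop{Var}(\widehat\tau) = n^{-2}\sum_i (Y_{0,i}+Y_{1,i})^2$. Its expectation is $n^{-1}\E[(Y_0+Y_1)^2]$. Each summand is then imputed using whichever potential outcome is observed. For a unit with observed $Y_0$ the $\phi_\rho$-formula is the relevant one, and for a unit with observed $Y_1$ the $\psi_\rho$-formula is. Both imputations are conditionally unbiased given the observed coordinate.

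Under $T\sim\mathrm{Bern}(1/2)$ independent of $(Y_0,Y_1)$, each arm's contribution averages to the same quantity $\E[(Y_0+Y_1)^2]$. The indicator weights should therefore be checked against the arm in which each outcome is observed. I would verify that the pairing of $T_i$ with the $\phi$- versus $\psi$-term matches the displayed formula, or flag it as a labeling convention. Population potentials would be replaced by $(\phi_n,\psi_n)$ as in the estimation section.

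The main obstacle is regularity: differentiating the log-partition function twice under the integral sign. This needs $\phi_\rho,\psi_\rho$ to be $C^2$. That follows from compactness of $\cY$ and boundedness of the integrand's derivatives in $y_0$ (respectively $y_1$), since the integrand $\exp(y_0y_1/\epsilon)$ is smooth in $y_0$ with bounded derivatives on $\cY$. I would state this via dominated convergence and take it as already implicit in Proposition~\ref{prop:conditional-moments}.
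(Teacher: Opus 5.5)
Your argument is correct and matches the paper's, whose proof is just Proposition~\ref{prop:conditional-moments} combined with the law of total expectation; the paper uses the mirror version for $\psi_\rho$ implicitly, and you rightly derive it from the second Sinkhorn equation. Your hesitation about the indicators is justified and can be stated firmly: since $Y_i = Y_{0,i}$ exactly when $T_i = 0$, the $\dot\phi_\rho,\ddot\phi_\rho$ term should carry the weight $(1-T_i)$ and the $\dot\psi_\rho,\ddot\psi_\rho$ term the weight $T_i$, so the displayed pairing is reversed, and the pairing in $\widehat{\mathop{Var}}_{\mathsf{FH}}$ is reversed in the same way.
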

\begin{proof}
  Apply Proposition~\ref{prop:conditional-moments} and the law of total expectation.
\end{proof}

In practice, $\dot\phi_\rho$ and $\ddot\phi_\rho$ are computed from the empirical dual potentials $\phi_n, \psi_n$ obtained by running the Sinkhorn algorithm on $\mu_n, \nu_n$. By Proposition~\ref{prop:conditional-moments}, the empirical counterparts are (with $\dot\psi_\rho$ and $\ddot\psi_\rho$ defined analogously):
\begin{align*}
\widehat{\dot\phi_\rho}(y_0) =  \frac{\E_{Y_1 \sim \nu_n} \Bigl[ Y_1 \exp \Bigl( \frac{y_0 Y_1 - \psi_n(Y_1)}{\epsilon} \Bigr) \Bigr]}{\E_{Y_1 \sim \nu_n} \Bigl[ \exp \Bigl( \frac{y_0 Y_1 - \psi_n(Y_1)}{\epsilon} \Bigr) \Bigr]} \;, \quad
\widehat{\ddot\phi_\rho}(y_0) =  \frac{\E_{Y_1 \sim \nu_n} \Bigl[ \bigl(Y_1 - \widehat{\dot\phi_\rho}(y_0)\bigr)^2 \exp \Bigl( \frac{y_0 Y_1 - \psi_n(Y_1)}{\epsilon} \Bigr) \Bigr]}{\E_{Y_1 \sim \nu_n} \Bigl[ \exp \Bigl( \frac{y_0 Y_1 - \psi_n(Y_1)}{\epsilon} \Bigr) \Bigr]} \;.
\end{align*}

We benchmark $\widehat{\mathop{Var}}_{\mathsf{SB}}$ against two competitors. The first is the Fréchet-Hoeffding bound, corresponding to the special case $\rho = 1$ (rank preservation):
\begin{align*}
 \widehat{\mathop{Var}}_{\mathsf{FH}}(\widehat{\tau}) := \frac{1}{n^2} \sum_{i=1}^n \bigl(Y_i + \widehat{G}^{-1} \circ \widehat{F}(Y_i)\bigr)^2 T_i + \bigl(Y_i + \widehat{F}^{-1} \circ \widehat{G}(Y_i)\bigr)^2 (1-T_i) \;.
\end{align*}
The second is the classical upper bound of \citet{neyman1923application}, which relies only on the marginal variances:
\begin{align*}
  \widehat{\mathop{Var}}_{\mathsf{N}}(\widehat{\tau}) := \frac{1}{n}\Bigl( \widehat{V}_0 + \widehat{V}_1 + 2\sqrt{\widehat{V}_0\,\widehat{V}_1} \Bigr) \;, \quad \widehat{V}_0 := \frac{2}{n}\sum_{i=1}^n Y_i^2(1-T_i) \;, \quad \widehat{V}_1 := \frac{2}{n}\sum_{i=1}^n Y_i^2 T_i \;.
\end{align*}

\begin{figure}[htbp]
  \centering
  % Row 1: Copula DGP
  \includegraphics[width=\linewidth]{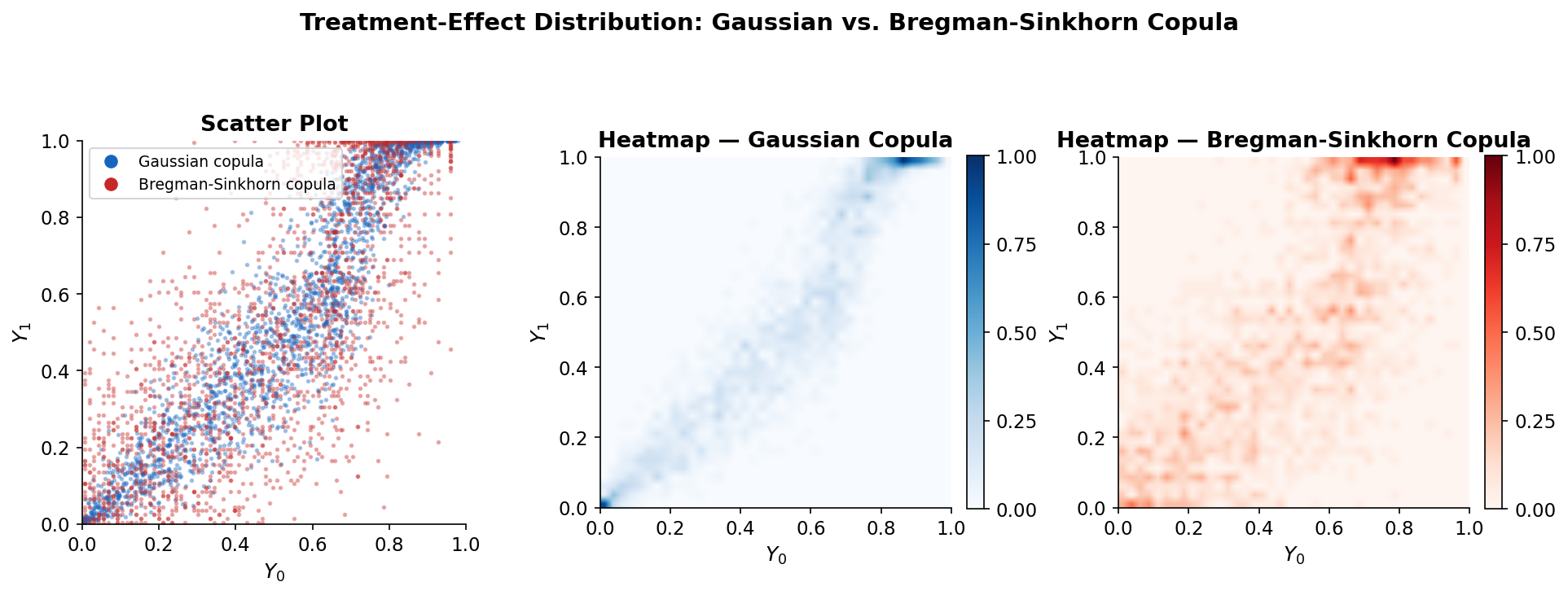}\\[6pt]
  % Row 2: Variance bounds — Sinkhorn (left) and Gaussian (right)
  \begin{subfigure}[t]{0.49\linewidth}
    \centering
    \includegraphics[width=\linewidth]{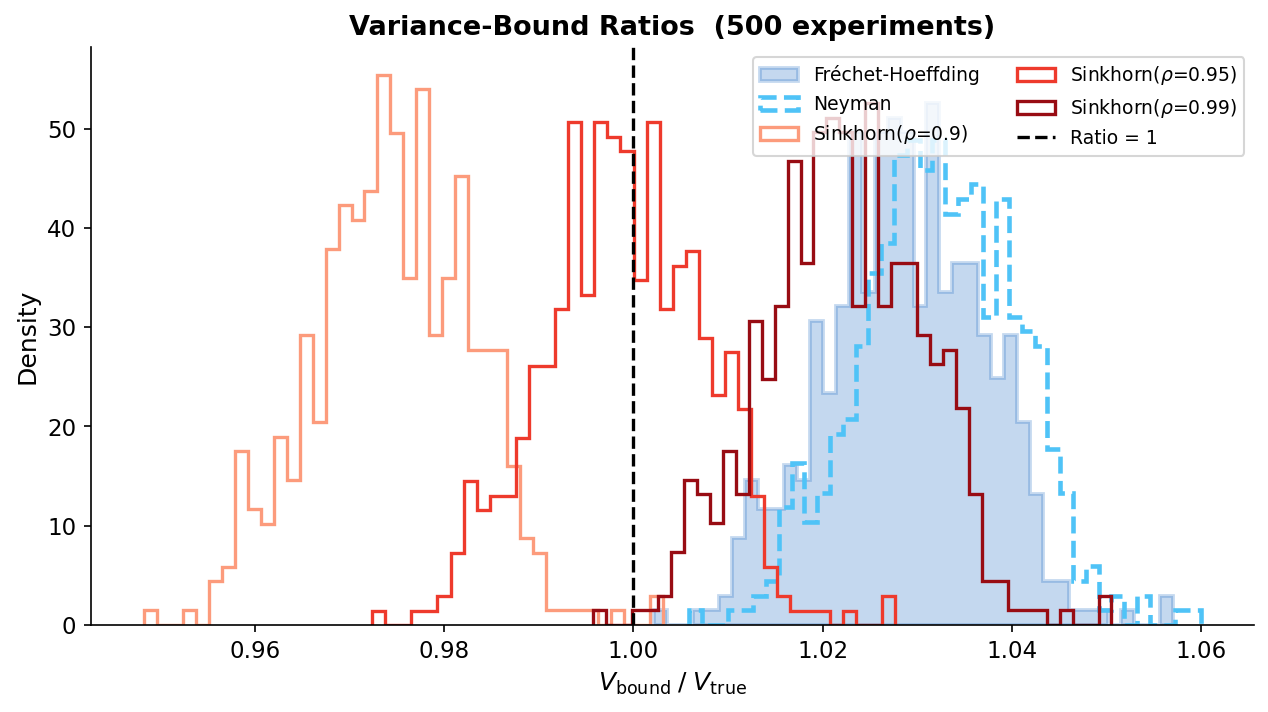}
    \caption{Bregman--Sinkhorn copula DGP.}
    \label{fig:variance-ate-bounds-sinkhorn}
  \end{subfigure}
  \hfill
  \begin{subfigure}[t]{0.49\linewidth}
    \centering
    \includegraphics[width=\linewidth]{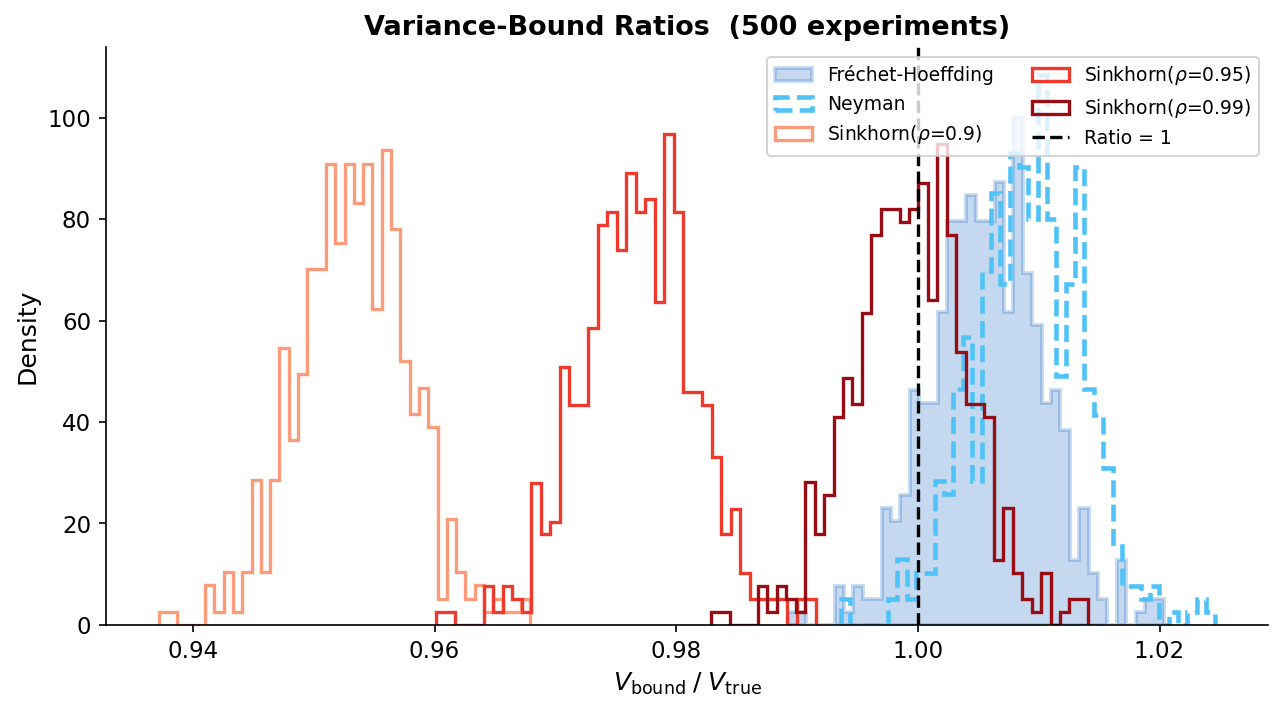}
    \caption{Gaussian copula DGP.}
    \label{fig:variance-ate-bounds-gaussian}
  \end{subfigure}
  \caption{\small Variance bounds of the ATE: $\widehat{\mathrm{Var}}_{\mathsf{FH}}$, $\widehat{\mathrm{Var}}_{\mathsf{N}}$ versus $\widehat{\mathrm{Var}}_{\mathsf{SB}}$ by varying the stickiness parameter $\rho \in \{0.9, 0.95, 0.99\}$, under the Bregman--Sinkhorn copula DGP (left) and the Gaussian copula DGP (right). The top row visualizes the copula DGPs and rank violations by plotting the unobserved $(Y_0, Y_1)$ pairs. The bottom row compares the variance bounds against the true variance across 500 independent draws of $T$.}
  \label{fig:variance-ate-bounds}
\end{figure}

\Cref{fig:variance-ate-bounds} reports a numerical comparison of all three bounds under two DGPs: (i) a Bregman-Sinkhorn copula DGP with $(F(Y_0), G(Y_1)) \sim c_{\mathsf{S}}^{\rho_0=0.95}$ and pronounced rank violations, and (ii) a Gaussian copula DGP with $(F(Y_0), G(Y_1)) \sim c_{\mathsf{G}}^{\rho_0=0.95}$ with moderate rank violations. Each bound is evaluated across 500 independent draws of $T$, with the potential outcomes held fixed, and compared against the true $\mathop{Var}(\widehat{\tau})$ computed from the unobserved $(Y_0, Y_1)$. The Bregman-Sinkhorn bound $\widehat{\mathop{Var}}_{\mathsf{SB}}$ is substantially tighter than both $\widehat{\mathop{Var}}_{\mathsf{FH}}$ and $\widehat{\mathop{Var}}_{\mathsf{N}}$, which overestimate the true variance across all replications.

The comonotonic coupling ($\rho = 1$) errs in both directions: it underestimates treatment effect heterogeneity by producing an overly concentrated TED, and overestimates the variance of the ATE estimator. The Bregman-Sinkhorn copula corrects both simultaneously: it recovers the full heterogeneity of the TED and yields a tighter variance estimator for the ATE.

\subsection{Covariate Adjustment}
\label{sec:covariate-adjustment}

In observational studies, treatment assignment $T$ may depend on covariates $X$, so $(Y_0, Y_1) \nindep T$. The Bregman-Sinkhorn framework extends naturally to this setting by conditioning on $X$.

\begin{assumption}[Unconfoundedness]
  \label{asmp:no-unobserved-confounding}
  $(Y_0, Y_1) \indep T \mid X = x$ for every $x \in \cX$.
\end{assumption}

\begin{assumption}[Conditional Bregman-Sinkhorn copula]
  \label{asmp:Bregman-Sinkhorn-copula-conditional}
  For every $x \in \cX$, the conditional distribution of $(Y_0, Y_1)$ given $X = x$ is a Bregman-Sinkhorn copula with stickiness parameter $\rho_x \in (0, 1]$:
  \begin{align*}
    (Y_0, Y_1) \mid X = x \;\sim\; \pi_{\rho_x}(\cdot \mid X = x) \in \Pi(\mu_x, \nu_x),
  \end{align*}
  where $\mu_x$ and $\nu_x$ denote the conditional distributions of $Y_0$ and $Y_1$ given $X = x$, respectively.
\end{assumption}
Assumption~\ref{asmp:Bregman-Sinkhorn-copula-conditional} generalizes the conditional rank preservation assumption of \citet{chernozhukov2005iv} and \citet{athey2006identification}, which corresponds to $\rho_x = 1$ for every $x$.

Under Assumption~\ref{asmp:no-unobserved-confounding}, the conditional marginals $\mu_x = \cL(Y_0 \mid X = x) = \cL(Y \mid X = x, T = 0)$ and $\nu_x = \cL(Y_1 \mid X = x) = \cL(Y \mid X = x, T = 1)$ are identified from the observed data. Under Assumption~\ref{asmp:Bregman-Sinkhorn-copula-conditional}, the conditional joint distribution $\pi_{\rho_x}(\cdot \mid X = x)$ is then identified as
$$
\pi_{\rho_x}(\cdot \mid X = x) := \argmax_{\pi \in \Pi(\mu_x, \nu_x)} \bigl\{ \sR(\pi) - \epsilon(\rho_x)\,\mathsf{KL}(\pi \mid \mu_x \otimes \nu_x) \bigr\} \;.
$$
The conditional TED $\cL(Y_1 - Y_0 \mid X = x)$ is recovered by pushing $\pi_{\rho_x}(\cdot \mid X = x)$ forward through $(y_0, y_1) \mapsto y_1 - y_0$, and the marginal TED $\cL(Y_1 - Y_0)$ is obtained by integrating over the distribution of $X$.

When $\rho_x = 1$ for every $x$, the conditional Bregman-Sinkhorn copula reduces to the conditional rank-preserving coupling, yet the marginal distribution of $(Y_0, Y_1)$ need not be rank-preserving---a feature unavailable to any single unconditional Bregman-Sinkhorn copula. To see this, let $X \in \{0, 1\}$ with equal probability, $(Y_0, Y_1) \mid X = 0 \sim \frac{1}{2} \delta_{(0.5,\, 0.25)} + \frac{1}{2} \delta_{(0.75,\, 1)}$, and $(Y_0, Y_1) \mid X = 1 \sim \frac{1}{2} \delta_{(0.25,\, 0.5)} + \frac{1}{2} \delta_{(1,\, 0.75)}$: each conditional is rank-preserving, yet the marginal is not. The conditional Bregman-Sinkhorn mixture therefore generates a strictly richer class of marginal dependence structures, accommodating heterogeneous rank dependence across covariate strata.

We now present an algorithm that computes the conditional rank-preserving Bregman-Sinkhorn copula $\pi_{\rho_x \equiv 1}(\cdot \mid X = x)$ jointly over all $x \in \cX$, for $\cY = \mathbb{R}$. The algorithm generalizes the parabolic Monge--Amp\`ere PDE of \cite{DebLiang2025}, which computes the optimal transport map between $\cL(Y_0 \mid X = x)$ and $\cL(Y_1 \mid X = x)$. The parabolic Monge--Amp\`ere PDE defines a gradient flow that converges to $\dot{\phi}_\infty(y; x)$ as $t \to \infty$: with initial condition $\dot{\phi}_0(y; x) = y$, 
\begin{align*}
\partial_t \dot\phi_t(y; x) = -  \ddot{\phi}_t(y; x) h( \dot{\phi}_t(y; x) ; x) \;,
\end{align*}
where $h(\cdot\,; x)$ is the score of the conditional log-density ratio,
\begin{align*}
h(y ; x) := \partial_y \log \frac{\PP( \dot\phi_t(Y_0; x) = y \mid X = x)}{\PP(Y_1 = y \mid X = x)} \;.
\end{align*}
Setting the right-hand side to zero yields static Monge--Amp\`ere equations \citep{Liang2025DS2,Liang2025DS1} that characterize the optimal transport map $\dot{\phi}_\infty(\cdot\,; x)$ for each $x$:
$$
\dot{\phi}_\infty(\cdot\,; x) \sharp \cL(Y_0 \mid X = x) = \cL(Y_1 \mid X = x) \;.
$$

Under Assumption~\ref{asmp:no-unobserved-confounding}, $h(\cdot\,; x)$ is identified from the observed data $(Y, X, T)$:
\begin{align*}
  h(y ; x) &= \partial_y \log \frac{\PP( \dot\phi_t(Y_0; x) = y \mid X = x, T = 0)}{\PP(Y_1 = y \mid X = x, T = 1)} \quad \text{(unconfoundedness)} \;, \\
&= \partial_y \log \frac{\PP( \dot\phi_t(Y; X) = y \mid X = x, T = 0) \PP(X = x \mid T=0)}{\PP(Y = y \mid X = x, T = 1) \PP(X = x \mid T=1)} \quad \text{(since $\partial_y \log \frac{\PP(X = x \mid T=0)}{\PP(X = x \mid T=1)} = 0$)} \;,\\
& = \partial_y \log \frac{\PP( \dot\phi_t(Y; X) = y,  X = x \mid T = 0)}{\PP(Y = y,  X = x \mid T = 1)} \quad \text{(Bayes' rule)}\;.
\end{align*}
The joint log-likelihood ratio $\log \frac{\PP( \dot\phi_t(Y; X) = y,  X = x \mid T = 0)}{\PP(Y = y,  X = x \mid T = 1)}$ is identified from the observed data and can be estimated by training a binary classifier to predict $T$ from the pooled sample $\{ (\dot\phi_t(Y_i; X_i), X_i) : T_i = 0 \}$ and $\{ (Y_i, X_i) : T_i = 1 \}$. Formal consistency guarantees for the resulting plug-in estimator of the conditional TED are left for future work.

% \section{Discussion}

%%%%%%%%%%%%%%%%%%%%%%%%%%%%%%%%%%%%%%%%%%%%%%
%% Acknowledgements                         %%
%%%%%%%%%%%%%%%%%%%%%%%%%%%%%%%%%%%%%%%%%%%%%%
\section*{Acknowledgements}
TL gratefully acknowledges support from the NSF Career Award (DMS-2042473)
and the Wallman Society of Fellows at the University of Chicago.

%%%%%%%%%%%%%%%%%%%%%%%%%%%%%%%%%%%%%%%%%%%%%%
%% References                               %%
%%%%%%%%%%%%%%%%%%%%%%%%%%%%%%%%%%%%%%%%%%%%%%
\ifbiblatex
%   \printbibliography
\else
  \bibliographystyle{abbrvnat}
  \bibliography{reference}
\fi

%%%%%%%%%%%%%%%%%%%%%%%%%%%%%%%%%%%%%%%%%%%%%%
%% Appendices                               %%
%%%%%%%%%%%%%%%%%%%%%%%%%%%%%%%%%%%%%%%%%%%%%%
\appendix

\section{Proofs}

\subsection{Proofs for Section~\ref{sec:rank-violations}}
\begin{proof}[Proof of Proposition~\ref{prop:maximize-average-rank}]
  Both claims follow from Brenier's Theorem \citep{brenier1991polar} and the identity
  $$
  \sR(\pi) =  - \int_{\cY \times \cY} \frac{1}{2}\| y_0 - y_1 \|^2 \dd \pi(y_0, y_1) + \int_{\cY \times \cY} \frac{1}{2}\| y_0 - y_1 \|^2 \dd \mu(y_0) \dd \nu(y_1) \;.
  $$
  Since the second term is a constant with respect to $\pi$, maximizing $\sR(\pi)$ is equivalent to minimizing the squared-distance transport cost, whose unique solution under absolute continuity is the Brenier map by Brenier's Theorem. For part~(ii), the regularized objective $-\sR(\pi) + \epsilon(\rho) \mathsf{KL}(\pi \mid \mu \otimes \nu)$ is the entropic OT objective up to a constant.
\end{proof}

\begin{proof}[Proof of Proposition~\ref{prop:rank-violations}]
By Theorem~\ref{thm:identifiability}, the unique solution to \cref{eqn:pi-rho} implies that $\pi_\rho$ admits a density with respect to $\mu \otimes \nu$ of the form
\begin{align*}
\PP(Y_1 = \dd y_1 \mid Y_0 = y_0) = \exp\Bigl( \frac{y_0 \cdot y_1 - \phi_{\rho}(y_0) - \psi_{\rho}(y_1)}{\epsilon(\rho)} \Bigr) \dd \nu(y_1) \;.
\end{align*}
Then for any $u \in (0, 1)$ and $\delta < \min(u, 1-u)$
\begin{align*}
&\PP\Bigl( Y_1 \notin [G^{-1}(u-\delta), G^{-1}(u+\delta)] \mid Y_0 = F^{-1}(u) \Bigr) \\
&= \int_{\R \backslash [G^{-1}(u-\delta), G^{-1}(u+\delta)]} \exp\Bigl( \frac{F^{-1}(u) \cdot y_1 - \phi_{\rho}(F^{-1}(u)) - \psi_{\rho}(y_1)}{\epsilon(\rho)} \Bigr) \dd \nu(y_1) \;.
\end{align*}
Since $\nu$ has a strictly positive density and $\cY$ is bounded, by standard Sinkhorn theory for strictly positive continuous costs on compact spaces \citep{csiszar1975divergence,nutz2021introduction}, there exists a bounded continuous pair $(\phi_\rho, \psi_\rho)$ solving the Sinkhorn equations and $\| \phi_\rho(y_0) + \psi_\rho(y_1) \|_\infty < \infty$. Hence $\exp\Bigl( \frac{y_0 \cdot y_1 - \phi_{\rho}(y_0) - \psi_{\rho}(y_1)}{\epsilon(\rho)} \Bigr) \geq c(\cY, \pi_\rho)$ uniformly in $y_1, y_0 \in \cY$ for some constant $c(\cY, \pi_\rho) > 0$. Therefore,
$$
\PP\Bigl( Y_1 \notin [G^{-1}(u-\delta), G^{-1}(u+\delta)] \mid Y_0 = F^{-1}(u) \Bigr) \geq c(\cY, \pi_\rho) \cdot (1-2\delta) \;.
$$
\end{proof}

\subsection{Proofs for Section~\ref{sec:bregman-sinkhorn-copula}}

\begin{proof}[Proof of Proposition~\ref{prop:conditional-moments}]
The conditional distribution of $Y_1 \mid Y_0 = y_0$ is given by \cref{eqn:conditional-distribution}, which is a member of the natural exponential family with natural parameter $y_0 / \epsilon(\rho)$ and log-partition function $\phi_\rho(y_0) / \epsilon(\rho)$. The first and second moments of an exponential family distribution follow from the first and second derivatives of the log-partition function, giving the desired result.
\end{proof}

\begin{proof}[Proof of Proposition~\ref{prop:rank-stickiness}]
The conditional distribution of $G(Y_1) \mid F(Y_0) = u_0$ is given by
\begin{align*}
\PP( G(Y_1) = u_1 \mid F(Y_0) = u_0) &= \frac{\exp\Bigl( \frac{ - B_{\psi_\rho} \bigl( G^{-1}(u_1) \mid \dot\psi^\ast_{\rho} \circ F^{-1}(u_0)  \bigr) }{\epsilon} \Bigr) }{\int_0^1 \exp\Bigl( \frac{ - B_{\psi_\rho} \bigl( G^{-1}(u'_1) \mid \dot\psi^\ast_{\rho} \circ F^{-1}(u_0)  \bigr) }{\epsilon} \Bigr)  \dd u'_1} \;.
\end{align*}
By the strict convexity of $\psi_\rho$, the Bregman divergence $B_{\psi_\rho} \bigl( G^{-1}(u_1) \mid \dot\psi^\ast_{\rho} \circ F^{-1}(u_0)  \bigr)$ has a strict minimum at $u_1 = G \circ \dot\psi^\ast_{\rho} \circ F^{-1} (u_0)$. By the assumption of strictly positive and smooth densities, the map $u_1 \mapsto \psi_\rho \circ G^{-1}(u_1)$ has strictly positive second derivative at $u_1 = G \circ \dot\psi^\ast_{\rho} \circ F^{-1} (u_0)$. Applying the Laplace method, we have for any positive function $h(u_1)$ that is bounded and smooth,
\begin{align*}
\frac{\int_0^1 h(u_1) \exp\Bigl( \frac{ - B_{\psi_\rho} \bigl( G^{-1}(u_1) \mid \dot\psi^\ast_{\rho} \circ F^{-1}(u_0)  \bigr) }{\epsilon_n} \Bigr)  \dd u_1}{\int_0^1 \exp\Bigl( \frac{ - B_{\psi_\rho} \bigl( G^{-1}(u_1) \mid \dot\psi^\ast_{\rho} \circ F^{-1}(u_0)  \bigr) }{\epsilon_n} \Bigr)  \dd u_1} = h\bigl( G \circ \dot\psi^\ast_{\rho} \circ F^{-1} (u_0) \bigr)  + o_n(1)\;.
\end{align*} 
The desired result follows by applying the above display to $h(u_1) = u_1$ and $h(u_1) = u_1^2$, respectively.
\end{proof}

\subsection{Proofs for Section~\ref{sec:identification-inference}}

\begin{proof}[Proof of Theorem~\ref{thm:limit-theorem-rho-1}]
Let $F_n$ and $G_n$ be the empirical CDFs, and define the empirical quantile function
\begin{align*}
G_n^{-1}(u) := \inf\{y \in \cY: G_n(y) \geq u\}, \qquad u \in [0,1].
\end{align*}
By Donsker's theorem and the quantile-process theorem, we have
\begin{align*}
\sqrt{n} \, \bigl( F_n(y) - F(y) \bigr) \rightsquigarrow \bB_0(F(y)), \\
\sqrt{n} \, \bigl( G_n^{-1}(u) - G^{-1}(u) \bigr) \rightsquigarrow \frac{\bB_1(u)}{g(G^{-1}(u))} \;.
\end{align*}
Because the two samples are independent, the convergence holds jointly in
$\ell^\infty(\cY) \times \ell^\infty([0,1])$, and the Gaussian limit has almost surely continuous sample paths, i.e., it lies in $C(\cY) \times C([0,1])$.

Now consider the map
\begin{align*}
\varPhi: \ell^\infty(\cY) \times \ell^\infty([0,1]) \to \ell^\infty(\cY),
\qquad \varPhi( h , k ) := k \circ h.
\end{align*}
Under the stated smoothness and positivity assumptions, $\varPhi$ is Hadamard differentiable at $(F,G^{-1})$, with derivative
\begin{align*}
\dot\varPhi(F,G^{-1})[h,k](y) = \frac{h(y)}{g(G^{-1}(F(y)))} + k(F(y)).
\end{align*}
Applying the functional delta method \citep[Theorem 3.9.4]{van1996weak} to $(F_n, G_n^{-1})$ yields
\begin{align*}
\sqrt{n} \, \bigl( G_n^{-1} \circ F_n(y) - G^{-1} \circ F(y) \bigr) \rightsquigarrow \frac{\bB_0(F(y))}{g(G^{-1} \circ F(y))} + \frac{\bB_1(F(y))}{g(G^{-1} \circ F(y))} \;.
\end{align*}
The limit is mean-zero Gaussian with covariance
\begin{align*}
\Sigma(y, y') = \frac{2\bigl(F(y \wedge y') - F(y) F(y')\bigr)}{g(G^{-1} \circ F(y)) g(G^{-1} \circ F(y'))} \;.
\end{align*}
The limit process has almost surely continuous sample paths on $\cY$. The equivalent representation with $\sqrt{2}\,\bB$ follows since the sum of two independent standard Brownian bridges is distributed as $\sqrt{2}$ times a standard Brownian bridge.
\end{proof}

\begin{proof}[Proof of Theorem~\ref{thm:limit-theorem-rho-less-than-1}]
Let $\epsilon := \epsilon(\rho)$. Define
  \begin{align*}
    h_\rho := \phi_\rho \oplus \psi_\rho,
    \qquad
    h_n := \phi_n \oplus \psi_n,
  \end{align*}
  and the Banach subspace
  \begin{align*}
    \cH_0
    := \Bigl\{ h \in C(\cY\times\cY): h(y_0,y_1)=\phi(y_0)+\psi(y_1),\ \phi,\psi \in C(\cY),\ \int_{\cY} \phi\,\dd\mu = 0 \Bigr\}
  \end{align*}
  with the sup norm. Let $\varPsi: C(\cY\times\cY)\times \sP(\cY)\times \sP(\cY)\to C(\cY\times\cY)$ be
  \begin{align*}
    \varPsi(h ; \mu,\nu)(y_0,y_1)
    := h(y_0,y_1)
    - \epsilon\log \int_{\cY\times\cY}
    \exp\Bigl( \frac{\langle y_0,y_1'\rangle + \langle y_0',y_1\rangle - h(y_0',y_1')}{\epsilon} \Bigr)
    \dd\mu(y_0')\dd\nu(y_1').
  \end{align*}
  For $h=\phi \oplus \psi$ in the separable class, define the gauge-centering projection
  \begin{align*}
    \Pi_0(h) := \bigl(\phi-\int_{\cY} \phi \,\dd\mu\bigr)\oplus\bigl( \psi+\int_{\cY} \phi \,\dd\mu\bigr) \in \cH_0.
  \end{align*}

$h_\rho\in\cH_0$ by the population Sinkhorn normalization. For $h_n$: the fixed-point equation $\varPsi(\cdot;\mu_n,\nu_n)=0$ is invariant under $(\phi_n,\psi_n)\mapsto(\phi_n+c,\psi_n-c)$ for any constant $c$, so we choose the unique $c_n$ with $\int(\phi_n+c_n)\,\dd\mu=0$; the sum $h_n=\phi_n\oplus\psi_n$ is unchanged, giving $h_n\in\cH_0$. Thus, 
\begin{align*}
\varPsi(h_\rho;\mu,\nu)=0,
\qquad
\varPsi(h_n;\mu_n,\nu_n)=0.
\end{align*}

By Lemma \ref{lem:linearization-invertibility-h0},
\begin{align*}
\cA:=\dot \varPsi ( h_\rho;\mu,\nu ) \vert_{\cH_0}:\cH_0\to\cH_0
\end{align*}
is continuously invertible. By Lemma \ref{lem:forcing-clt-h0}, there exists a centered Gaussian element $\bW \in\cH_0$ such that
\begin{align*}
\sqrt{n}\,\Pi_0\bigl(\varPsi(h_\rho;\mu_n,\nu_n)-\varPsi(h_\rho;\mu,\nu)\bigr)
\rightsquigarrow \bW \quad \text{in } \cH_0 \subset C(\cY\times\cY).
\end{align*}

Apply the implicit functional delta method at $(h_\rho;\mu,\nu)$ on the Banach space $\cH_0$:
\begin{align*}
\sqrt{n}(h_n-h_\rho)
= -\cA^{-1}\sqrt{n}\,\Pi_0\bigl(\varPsi(h_\rho;\mu_n,\nu_n)-\varPsi(h_\rho;\mu,\nu)\bigr)+o_p(1)
\end{align*}
in $\cH_0$. The projection $\Pi_0$ appears because $h_n-h_\rho\in\cH_0$ and $\cA:\cH_0\to\cH_0$, so both sides lie in $\cH_0$; applying $\Pi_0$ to the forcing term $-\varPsi(h_\rho;\mu_n,\nu_n)$ (which is separable but not necessarily centered) projects out the component outside $\cH_0$ without affecting the equation. Therefore,
\begin{align*}
\sqrt{n}(h_n-h_\rho) \rightsquigarrow -\cA^{-1}\bW =: \bZ_\rho
\quad \text{in } \cH_0 \cong C(\cY) \oplus C(\cY).
\end{align*}
Since $\cA^{-1}$ is continuous linear and $\bW$ is centered Gaussian, $\bZ_\rho$ is centered Gaussian. Writing $\bZ_\rho = \bZ_{0, \rho} \oplus \bZ_{1, \rho}$ with $\bZ_{0, \rho}, \bZ_{1, \rho} \in C(\cY)$, the convergence holds jointly for the individual potentials in $C(\cY) \times C(\cY)$.
\end{proof}

\subsection{Proofs of Auxiliary Lemmas}

\begin{lemma}[Invertibility on $\cH_0$]
  \label{lem:linearization-invertibility-h0}
  Fix $\rho\in(0,1)$ and write $\epsilon:=\epsilon(\rho)$. Let
  \begin{align*}
    h_\rho=\phi_\rho\oplus\psi_\rho,
    \qquad
    \cH_0:=\Bigl\{h\in C(\cY\times\cY): h(y_0,y_1)= \phi(y_0)+\psi(y_1),\ \phi,\psi\in C(\cY),\ \int \phi \,\dd\mu=0\Bigr\},
  \end{align*}
  and
  \begin{align*}
    \varPsi(h;\mu,\nu)(y_0,y_1)
    := h(y_0,y_1)
    -\epsilon\log\int_{\cY\times\cY}
    \exp\Bigl(\frac{\langle y_0,y_1'\rangle+\langle y_0',y_1\rangle-h(y_0',y_1')}{\epsilon}\Bigr)
    \dd\mu(y_0')\dd\nu(y_1').
  \end{align*}
  Then,
  \begin{enumerate}
    \item[(i)] $\varPsi(\cdot;\mu,\nu)$ is Fr\'echet differentiable at $h_\rho$ on $C(\cY\times\cY)$, denoted by $\dot\varPsi(h_\rho;\mu,\nu)$.
   
    \item[(ii)] The restriction
    \begin{align*}
      \cA:=\dot \varPsi(h_\rho;\mu,\nu)\vert_{\cH_0}:\cH_0\to\cH_0
    \end{align*}
    is a bounded linear isomorphism.
  \end{enumerate}
\end{lemma}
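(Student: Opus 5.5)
The plan is to exploit the fact that, on separable inputs, $\varPsi$ splits into the two Sinkhorn half-maps. I then linearize, identify the derivative as ``identity plus a compact Markov-type operator'', and conclude with injectivity plus the Fredholm alternative. Injectivity uses the same Doeblin-type lower bound as in the proof of Theorem~\ref{thm:identifiability}.

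\textbf{Step 1 (differentiability).} For a general $k\in C(\cY\times\cY)$, set
\begin{align*}
Q_h(y_0,y_1;\dd y_0',\dd y_1') \propto \exp\bigl((\langle y_0,y_1'\rangle+\langle y_0',y_1\rangle-h(y_0',y_1'))/\epsilon\bigr)\,\dd\mu(y_0')\dd\nu(y_1'),
\end{align*}
normalized to a probability measure for each $(y_0,y_1)$. Since $\cY$ is compact and $h$ is bounded, the integrand is bounded above and below uniformly. The map $h\mapsto -\epsilon\log\int e^{(\cdots-h)/\epsilon}$ is therefore smooth, with directional derivative $k\mapsto \int k\,\dd Q_h(y_0,y_1;\cdot)$. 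A second-order Taylor bound for log-sum-exp gives a remainder of at most $\|k\|_\infty^2/(2\epsilon)$, uniformly in $(y_0,y_1)$. This yields Fréchet differentiability with
\begin{align*}
\dot\varPsi(h_\rho;\mu,\nu)[k] = k + \cP k, \qquad \cP k(y_0,y_1):=\int k\,\dd Q_{h_\rho}(y_0,y_1;\cdot).
\end{align*}
The sign is $+$: indeed $\varPsi(h+c)=\varPsi(h)+2c$, so constants are mapped to $2c$, not killed.

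\textbf{Step 2 (structure on $\cH_0$).} At a separable $h_\rho=\phi_\rho\oplus\psi_\rho$, the measure $Q_{h_\rho}(y_0,y_1;\cdot)$ factorizes as $w(y_1,\dd y_0')\otimes v(y_0,\dd y_1')$. Here $v(y_0,\cdot)=\pi_\rho(\dd y_1'\mid y_0)$ and $w(y_1,\cdot)=\pi_\rho(\dd y_0'\mid y_1)$ are the conditional laws from Proposition~\ref{prop:bregman-divergence-representation}. Hence, for $k=a\oplus b$,
\begin{align*}
\cA(a\oplus b) = (a+P_\nu b)\oplus(b+P_\mu a),
\end{align*}
where $P_\nu b(y_0)=\int b\,v(y_0,\dd y_1)$ and $P_\mu a(y_1)=\int a\,w(y_1,\dd y_0)$. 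The output is separable. The representation of a separable function is unique modulo $(a,b)\mapsto(a+c,b-c)$, so every continuous separable function lies in $\cH_0$ after regauging. Thus $\cA$ maps $\cH_0$ into $\cH_0$. I would also check that $\cH_0$ is closed in $C(\cY\times\cY)$: sup-norm limits of separable continuous functions are separable (fix $y_1^\ast$ and recover $\phi$ from $h(\cdot,y_1^\ast)$). This closedness makes $\cH_0$ Banach, and $\|a\|_\infty+\|b\|_\infty\lesssim\|a\oplus b\|_\infty$ under the centering $\int a\,\dd\mu=0$. Finally, $P_\nu$ and $P_\mu$ have kernels that are continuous and bounded on the compact $\cY\times\cY$, so Arzelà--Ascoli makes them compact on $C(\cY)$. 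Consequently $\cA=I+\mathcal K$ with $\mathcal K$ compact on $\cH_0$.

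\textbf{Step 3 (injectivity, then Fredholm).} Suppose $\cA(a\oplus b)=0$ as a function. Then $a+P_\nu b=c$ and $b+P_\mu a=-c$ for some constant $c$. Eliminating $a$ gives $b=P_\mu P_\nu b + \text{const}$. Integrating against $\nu$ shows the constant vanishes, because $\nu$ is invariant for $P_\mu P_\nu$ (both marginals of $\pi_\rho$ are correct). So $b$ is harmonic for the Markov operator $P_\mu P_\nu$.

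Its kernel dominates $c_\psi\,\nu$, exactly as in the proof of Theorem~\ref{thm:identifiability}. The oscillation contraction from that proof then forces $b\equiv\beta$. Hence $a\equiv c-\beta$, and $a\oplus b\equiv c$ is constant. But $\cA(c)=2c$, so $c=0$ and the kernel is trivial. The Fredholm alternative for $I+\mathcal K$ then gives surjectivity, and the open mapping theorem gives a bounded inverse.

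The main obstacle I anticipate is Step 2's bookkeeping of the gauge. This means verifying that $\cH_0$ is a closed subspace on which $\cA$ is well defined and compact-perturbation-of-identity. Equivalently, one must check that the quotient by the gauge does not create a spurious kernel. I would handle this by working with the explicit coordinates $(a,b)$ under the centering $\int a\,\dd\mu=0$ and transferring bounds back to the sup norm on $\cH_0$.
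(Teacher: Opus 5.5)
Your proposal is correct and follows essentially the same route as the paper. Both write the derivative as $I+\cK$ with the factorized kernel, get compactness of $\cK$ from Arzel\`a--Ascoli, prove injectivity with the Doeblin-type lower bound from the proof of Theorem~\ref{thm:identifiability}, and finish with the Fredholm alternative and the bounded inverse theorem. Your extra bookkeeping fills in steps the paper leaves implicit rather than changing the argument: you check that $\cH_0$ is closed, and you kill the constant produced by separating the $y_0$- and $y_1$-parts using invariance of $\nu$ under $P_\mu P_\nu$, where the paper tacitly relies on the gauge $\int a\,\dd\mu=0$.
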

\begin{proof}
Part (i) follows from differentiation under the integral sign and smoothness of $\log$ and the compact domain assumption. Specifically, for any $u \in C(\cY\times\cY)$,
\begin{align*}
  \dot \varPsi(h_\rho;\mu,\nu)[u](y_0,y_1)
  =u(y_0,y_1) + \int_{\cY\times\cY}u(y_0',y_1')\,w_\rho(y_0,y_1;\dd y_0',\dd y_1')
\end{align*}
where the probability transition $w_\rho(y_0, y_1 ; \cdot)$ is given by
$$
  w_\rho(y_0,y_1;\dd y_0',\dd y_1') := \frac{\exp\Bigl(\frac{\langle y_0,y_1'\rangle+\langle y_0',y_1\rangle-h_\rho(y_0',y_1')}{\epsilon}\Bigr)}{\int_{\cY\times\cY}\exp\Bigl(\frac{\langle y_0,y_1''\rangle+\langle y_0'',y_1\rangle-h_\rho(y_0'',y_1'')}{\epsilon}\Bigr)\dd\mu(y_0'')\dd\nu(y_1'')}\,\dd\mu(y_0')\dd\nu(y_1') \;.
$$

For part (ii), $\cA=I+\cK$ where
\begin{align*}
\cK h(y_0,y_1):=\int h(y_0',y_1')\,w_\rho(y_0,y_1;\dd y_0',\dd y_1') \;.
\end{align*}

The operator preserves $\cH_0$ since $h_\rho$ is separable, and therefore the probability transition factorizes $w_\rho(y_0,y_1;\dd y_0',\dd y_1') = w_{\rho}(y_1;\dd y_0') w_{\rho}(y_0;\dd y_1')$,
where 
\begin{align*}
 w_{\rho}( y_1;\dd y_0') &:= \frac{\exp\Bigl(\frac{\langle y'_0,y_1 \rangle - \phi_\rho (y'_0)}{\epsilon}\Bigr)}{ \int_{\cY} \exp\Bigl(\frac{\langle y''_0,y_1 \rangle - \phi_\rho (y''_0)}{\epsilon}\Bigr)\dd \mu(y''_0) } \dd \mu(y'_0) \;, \\
 w_{\rho}( y_0;\dd y_1') &:= \frac{\exp\Bigl(\frac{\langle y_0,y'_1 \rangle - \psi_\rho (y'_1)}{\epsilon}\Bigr)}{ \int_{\cY} \exp\Bigl(\frac{\langle y_0,y''_1 \rangle - \psi_\rho (y''_1)}{\epsilon}\Bigr)\dd \nu(y''_1) } \dd \nu(y'_1) \;.
\end{align*}
For $h = a \oplus b \in \cH_0$,
\begin{align*}
\cK h(y_0,y_1) &:=\int_{\cY \times \cY} h(y_0',y_1')  w_{\rho}( y_1;\dd y_0') w_{\rho}( y_0;\dd y_1') \\
&= \int_{\cY} b(y'_1) w_{\rho}( y_0;\dd y_1')  +   \int_{\cY} a(y'_0) w_{\rho}( y_1;\dd y_0') \\
&= \Pi_0\Bigl( \int_{\cY} b(y'_1) w_{\rho}( y_0;\dd y_1')  +   \int_{\cY} a(y'_0) w_{\rho}( y_1;\dd y_0') \Bigr) \in \cH_0 \;,
\end{align*}
where the last equality holds because $\Pi_0$ does not change the pointwise value of the function: for any $\phi,\psi\in C(\cY)$,
\begin{align*}
  \Pi_0(\phi \oplus \psi)(y_0,y_1)
  :=\Bigl(\phi(y_0)-c\Bigr)+\Bigl(\psi(y_1)+c\Bigr)=\phi(y_0)+\psi(y_1),
  \quad c:=\int\phi\,\dd\mu \;,
\end{align*}
so the second and third lines are equal as functions; $\Pi_0$ merely re-centers the split to place the result in $\cH_0$.
 
We now prove injectivity. Suppose $\cA[u] = 0$ for $u = a \oplus b \in \cH_0$, so $(I+\cK)(a\oplus b)=0$. From the factorization of $\cK$ and the gauge condition $\int a\,\dd\mu=0$, separating the $y_0$- and $y_1$-dependent parts gives
\begin{align*}
  a(y_0) + \int_{\cY} b(y_1')\,w_\rho(y_0;\dd y_1') = 0 \;, \qquad
  b(y_1) + \int_{\cY} a(y_0')\,w_\rho(y_1;\dd y_0') = 0 \;.
\end{align*}
Substituting the second equation into the first yields $a = T_\rho a$, where
\begin{align*}
  T_\rho a(y_0) := \int_{\cY}\int_{\cY} a(y_0')\,w_\rho(y_1;\dd y_0')\,w_\rho(y_0;\dd y_1) \;.
\end{align*}
Since $a \in \cH_0$ satisfies $\int a\,\dd\mu = 0$, any nonzero $a$ must be non-constant. The same argument used in the proof of \cref{thm:identifiability} (applied to the probability kernels $w_\rho(y_0;\cdot)$ and $w_\rho(y_1;\cdot)$, which have strictly positive densities under Assumption~\ref{asmp:no-atoms}) then gives $\|T_\rho a\|_\infty < \|a\|_\infty$, contradicting $a = T_\rho a$. Hence $a = 0$ and $b = 0$. Thus $\ker(\cA) = \{0\}$.

Because $w_\rho(y_0,y_1;\cdot)$ has a jointly continuous density on $\cY \times \cY$, for any bounded set $B\subset\cH_0$ the image $\cK(B)$ is uniformly bounded and, by the joint continuity of $(y_0,y_1)\mapsto\int h\,\dd w_\rho(y_0,y_1;\cdot)$, equicontinuous in $(y_0,y_1)$; Arzel\`a--Ascoli then gives relative compactness of $\cK(B)$ in $C(\cY\times\cY)$, so $\cK:\cH_0\to\cH_0$ is compact. Hence $\cA=I+\cK$ is Fredholm of index zero; injectivity ($\ker\cA=\{0\}$) implies surjectivity by the Fredholm alternative, and bijectivity on the Banach space $\cH_0$ gives a bounded inverse by the bounded inverse theorem.
\end{proof}

\begin{lemma}[CLT on $\cH_0$]
  \label{lem:forcing-clt-h0}
  Under the assumptions of \cref{thm:limit-theorem-rho-less-than-1}, there exists a centered Gaussian process $\bW\in\cH_0$ such that
  \begin{align*}
    \sqrt{n}\,\Pi_0\bigl(\varPsi(h_\rho;\mu_n,\nu_n)-\varPsi(h_\rho;\mu,\nu)\bigr)
    \rightsquigarrow \bW
    \qquad\text{in }\cH_0 \;.
  \end{align*}
\end{lemma}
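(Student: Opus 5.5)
Since $h_\rho$ is a fixed deterministic function, the forcing term is a smooth functional of the two empirical measures evaluated along a fixed function class. The plan is a Donsker argument followed by a delta-method linearization of the logarithm. By the factorization used in \cref{lem:linearization-invertibility-h0}, the separable structure gives
\begin{align*}
\varPsi(h_\rho;\mu_n,\nu_n)(y_0,y_1) = h_\rho(y_0,y_1) - \epsilon\log \int_{\cY} e^{(\langle y_0,y_1'\rangle-\psi_\rho(y_1'))/\epsilon}\dd\nu_n(y_1') - \epsilon\log\int_{\cY} e^{(\langle y_0',y_1\rangle-\phi_\rho(y_0'))/\epsilon}\dd\mu_n(y_0').
\end{align*}
The difference $\varPsi(h_\rho;\mu_n,\nu_n)-\varPsi(h_\rho;\mu,\nu)$ is therefore $-(\epsilon\log \nu_n K_{y_0} - \epsilon\log \nu K_{y_0}) \oplus -(\epsilon\log \mu_n L_{y_1}-\epsilon\log\mu L_{y_1})$. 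Here $K_{y_0}(y_1') := \exp((\langle y_0,y_1'\rangle-\psi_\rho(y_1'))/\epsilon)$ and $L_{y_1}$ is defined analogously. This difference is separable, so $\Pi_0$ is well defined on it.

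First, I will show that the classes $\cF_1=\{K_{y_0}: y_0\in\cY\}$ and $\cF_0=\{L_{y_1}:y_1\in\cY\}$ are $\nu$- and $\mu$-Donsker. Each is indexed by the compact set $\cY\subset\R^d$. Because $\psi_\rho$ and $\phi_\rho$ are bounded, the map $y_0\mapsto K_{y_0}(y_1')$ is Lipschitz in $y_0$ uniformly in $y_1'$, with constant at most $\mathrm{diam}(\cY)\,e^{(R+B)/\epsilon}/\epsilon$ and with $R,B$ as in the proof of \cref{thm:identifiability}. A class that is Lipschitz in a finite-dimensional compact parameter has bracketing entropy of order $d\log(1/\eta)$, which is Donsker by \citep[Theorem 2.7.11]{van1996weak}. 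Hence $\sqrt n(\nu_n-\nu)K_{\cdot}\rightsquigarrow \bG_1$ in $\ell^\infty(\cY)$, and likewise $\sqrt n(\mu_n-\mu)L_\cdot \rightsquigarrow \bG_0$. The limits are independent because the samples are independent, so the convergence is joint. The limits have continuous paths because the Lipschitz index map makes the intrinsic $L^2$ semimetric continuous with respect to the Euclidean metric.

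Next comes the linearization. The denominators $\nu K_{y_0}$ and $\mu L_{y_1}$ are bounded below by $e^{-(R+B)/\epsilon}>0$ uniformly. The map $z\mapsto \epsilon\log z$ from $\{z\in\ell^\infty(\cY):\inf z>0\}$ is therefore Hadamard (indeed Fréchet) differentiable, with derivative $u\mapsto \epsilon u/z$. The functional delta method \citep[Theorem 3.9.4]{van1996weak} then gives joint weak convergence of $\sqrt n$ times both components to $(-\epsilon\bG_0/\mu L_\cdot,\ -\epsilon\bG_1/\nu K_\cdot)$ in $C(\cY)\times C(\cY)$. Finally, I will apply the map $(a,b)\mapsto \Pi_0(a\oplus b) = (a-\int a\,\dd\mu)\oplus(b+\int a\,\dd\mu)$. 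This map is linear and bounded from $C(\cY)^2$ into $\cH_0$, since $|\int a\,\dd\mu|\le\|a\|_\infty$, and $\cH_0$ is a closed subspace of $C(\cY\times\cY)$. The continuous mapping theorem then yields $\bW$ as a centered Gaussian element of $\cH_0$.

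The main obstacle is the Donsker and continuity step. I must check that the limits lie in $C(\cY)$, so that the output lands in $\cH_0$ and not merely in $\ell^\infty$, and that measurability issues are handled. The uniform Lipschitz bound in the index settles both, since it gives asymptotic equicontinuity. A subtlety I will verify is that $\int a\,\dd\mu$ uses the population $\mu$, which is fixed, so that the projection remains a deterministic continuous linear map.
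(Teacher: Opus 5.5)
Your proposal is correct and follows essentially the paper's route: linearize the two log-partition terms in the empirical measures, show the kernel classes indexed by $\cY$ are Donsker, and push the joint limit through the bounded linear map $\Pi_0$. Your limit $(-\epsilon\bG_1/\nu K_\cdot)\oplus(-\epsilon\bG_0/\mu L_\cdot)$ coincides with the paper's $\bG_\nu\oplus\bG_\mu$ obtained from $\cB_\nu,\cB_\mu$. Your version is in fact more careful on the two steps the paper only asserts: the Lipschitz-in-index bracketing bound is what actually justifies the Donsker property (equicontinuity in the argument alone would not), and the functional delta method for $z\mapsto\epsilon\log z$ supplies the remainder bound $\|r_n\|_\infty=o_p(n^{-1/2})$. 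The only slip is immaterial: your Lipschitz constant should involve $\sup_{y\in\cY}\|y\|$ rather than $\mathrm{diam}(\cY)$.
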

\begin{proof}
The partial Fréchet derivatives of $\varPsi(h_\rho;\cdot,\cdot)$ at $(\mu,\nu)$ with respect to each marginal are the linear maps $\cB_\mu:\sM(\cY)\to C(\cY)$ and $\cB_\nu:\sM(\cY)\to C(\cY)$ defined by
\begin{align*}
  \cB_\mu[\eta](y_1) := - \epsilon  \int_{\cY} \frac{ \exp\Bigl( \frac{\langle y'_0,y_1 \rangle - \phi_\rho(y_0')}{\epsilon} \Bigr)  }{ \int_{\cY} \exp\Bigl( \frac{\langle y''_0,y_1 \rangle - \phi_\rho(y''_0)}{\epsilon} \Bigr) \dd \mu(y''_0) }  \dd \eta(y'_0) \;. \\
  \cB_\nu[\zeta](y_0) := - \epsilon \int_{\cY} \frac{  \exp\Bigl( \frac{\langle y_0,y'_1 \rangle - \psi_\rho(y'_1)}{\epsilon} \Bigr)  }{ \int_{\cY} \exp\Bigl( \frac{\langle y_0,y''_1 \rangle - \psi_\rho(y''_1)}{\epsilon} \Bigr) \dd \nu(y''_1) } \dd \zeta(y'_1) \;.
\end{align*}

For fixed $h_\rho$, the map $(\eta,\zeta)\mapsto\varPsi(h_\rho;\eta,\zeta)$ is Fr\'echet differentiable at $(\mu,\nu)$ in signed-measure directions, so
\begin{align*}
\varPsi(h_\rho;\mu_n,\nu_n)-\varPsi(h_\rho;\mu,\nu)
=\cB_\mu[\mu_n-\mu]+\cB_\nu[\nu_n-\nu]+r_n \;,
\end{align*}

Consider the function classes indexed by the parameter $y_1\in\cY$ (resp.\ $y_0\in\cY$):
\begin{align*}
  \mathcal{F}_\mu &:= \Bigl\{ f_{y_1}(y_0') := -\epsilon\,\frac{\exp\bigl(\tfrac{\langle y_0',y_1\rangle-\phi_\rho(y_0')}{\epsilon}\bigr)}{\int_\cY \exp\bigl(\tfrac{\langle y_0'',y_1\rangle-\phi_\rho(y_0'')}{\epsilon}\bigr)\dd\mu(y_0'')} \;:\; y_1\in\cY \Bigr\} \;, \\
  \mathcal{F}_\nu &:= \Bigl\{ g_{y_0}(y_1') := -\epsilon\,\frac{\exp\bigl(\tfrac{\langle y_0,y_1'\rangle-\psi_\rho(y_1')}{\epsilon}\bigr)}{\int_\cY \exp\bigl(\tfrac{\langle y_0,y_1''\rangle-\psi_\rho(y_1'')}{\epsilon}\bigr)\dd\nu(y_1'')} \;:\; y_0\in\cY \Bigr\} \;,
\end{align*}
so that $\cB_\mu[\eta](y_1)=\int f_{y_1}\,\dd\eta$ and $\cB_\nu[\zeta](y_0)=\int g_{y_0}\,\dd\zeta$. By compactness of $\cY$, joint continuity of $(y_0',y_1)\mapsto\langle y_0',y_1\rangle$, continuity of $\phi_\rho$ and $\psi_\rho$, and the density positivity assumption (which keeps the denominators bounded away from zero), the classes $\mathcal{F}_\mu$ and $\mathcal{F}_\nu$ are uniformly bounded and equicontinuous as functions of $y_0'$ (resp.\ $y_1'$) uniformly over the index $y_1$ (resp.\ $y_0$). Hence both classes have finite uniform entropy integral and are Donsker. In particular,
\begin{align*}
\cB_{\mu} \bigl[ \sqrt{n}(\mu_n-\mu) \bigr] (y_1) \rightsquigarrow \bG_\mu(y_1) \text{ in }C(\cY) \;,
\qquad
\cB_{\nu} \bigl[ \sqrt{n}(\nu_n-\nu) \bigr] (y_0) \rightsquigarrow \bG_\nu(y_0) \text{ in }C(\cY) \;,
\end{align*}
jointly and independently. Note $\|r_n\|_\infty=o_p(n^{-1/2})$. Therefore,
\begin{align*}
\sqrt{n}\,\bigl(\varPsi(h_\rho;\mu_n,\nu_n)-\varPsi(h_\rho;\mu,\nu)\bigr)
=\cB_\mu\bigl(\sqrt{n}(\mu_n-\mu)\bigr)+\cB_\nu\bigl(\sqrt{n}(\nu_n-\nu)\bigr)+o_p(1) \;.
\end{align*}
Applying the continuous linear maps $\cB_\mu,\cB_\nu$, then $\Pi_0$, gives the claimed convergence with
\begin{align*}
\bW(y_0,y_1):=\Pi_0\bigl( \bG_\mu(y_1)+ \bG_\nu(y_0) \bigr) \in \cH_0 \;.
\end{align*}
\end{proof}

\end{document}